\newtheorem{assumption}{Assumption}
\numberwithin{assumption}{section}
\newtheorem{proposition}{Proposition}
\numberwithin{proposition}{section}
\newtheorem{corollary}{Corollary}
\numberwithin{corollary}{section}
\newtheorem{remark}{Remark}
        \newenvironment{propositionp}[1]{
        
        \propositionalt
    }{\endpropositionalt}
\newcommand{\E}{\mathbb{E}} 
\newcommand{\Pb}{\mathbb{P}} 
\newcommand{\T}{\mathcal{T}}
\newcommand{\G}{\mathcal{G}}
\newcommand{\eq}{=} 
\definecolor{chocolate}{rgb}{1,0.5,0.14}
\definecolor{burgundy}{rgb}{0.78125, 0.1, 0.246}
\title{{Dynamic LATEs with a Static Instrument}\thanks{We would like to thank Vítor Possebom, Pedro Sant'Anna, Claudia Noack, Peter Hull, David Slichter, Sílvia Gonçalves, Paul Goldsmith-Pinkham, Lucas Lima, Camila Galindo, Carolina Caetano, Marinho Bertanha, and Joshua Angrist for excellent comments and suggestions. We also thank participants at the FGV-EESP seminar, and at the CEA 2023, the Causal Data Science Meeting, and the SBE 2023 conferences. This paper supersedes an earlier draft, titled ``Identifying Dynamic LATEs with a Static Instrument'' (\cite{ferman2023dLATEs}).}}
\author{Bruno Ferman\thanks{Sao Paulo School of Economics - FGV; email: bruno.ferman@fgv.br} \and Otávio Tecchio\thanks{MIT Department of Economics; email: otecchio@mit.edu}}
\date{\today}
\begin{document}
\maketitle


\begin{abstract}

In many situations, researchers are interested in identifying dynamic effects of an irreversible treatment with a time-invariant binary instrumental variable (IV). For example, in evaluations of dynamic effects of training programs with a single lottery determining eligibility. A common approach in these situations is to report per-period IV estimates. Under a dynamic extension of standard IV assumptions, we show that such IV estimands identify a weighted sum of treatment effects for different latent groups and treatment exposures. However, there is possibility of negative weights. We discuss point and partial identification of dynamic treatment effects in this setting under different sets of assumptions.

\hfill \break
\textit{Keywords:} Instrumental Variables; Dynamic Local Average Treatment Effects; Negative Weights.
    
\hfill \break
\textit{JEL Codes:} C22; C23; C26.
\end{abstract}

\newpage
\section{Introduction}

In many situations, researchers are interested in identifying dynamic effects of an irreversible treatment with a time-invariant binary instrumental variable (IV). As an example, consider evaluations of dynamic effects of training programs exploiting a single lottery determining eligibility for a given cohort (e.g., \cite{jobcorps, alzua2016, hirshleifer2016, das2021}). Another example is the estimation of the dynamic effects of fertility on labor market outcomes using exogenous variations such as twins at first birth, sex composition of the first two children, and in-vitro fertilization success (e.g., \cite{bronars_aertwins, angelov2012mothers, silles, lundborg2017}). A common approach in these situations is to report per-period reduced form (RF) or IV estimates using an any-exposure indicator  as the treatment variable.

We show that if observations can access treatment at any period, those common approaches may recover weighted sums of causal effects in which some weights are negative.  If first stages are decreasing over time, then there must be negative weights (and there may also be negative weights when the first stage is nondecreasing). We then extend the identification results by \textcite{ischemia}. Specifically, it is possible to identify dynamic local average treatment effects (LATEs) even when there are defiers after the first period under a generalization of their wave ignorability assumption. Finally, we consider partial identification of dynamic LATEs without requiring any restriction on treatment effect heterogeneity.

This paper is related to a few different strands of the econometrics and applied econometrics literature. \textcite{lundborg2017}  recognize the shortcoming of per-period IV estimands when estimating dynamic effects of fertility on women's labor market outcomes. However, they do not provide a formal decomposition in a general setting with heterogeneous treatment effects nor discuss point and partial identification. \textcite{miquel2002iv} considers identification of dynamic treatment effects with a static instrument under conditions  that   are unreasonable for applications such as estimating dynamic effects of training programs or fertility.\footnote{\textcite{miquel2002iv} assumes that potential outcomes are independent of the instrument conditional on a history of treatment assignments. However, in the context of training programs or fertility, conditioning on a history of realized treatments implies conditioning on different latent groups depending on whether $Z_i=1$ or $Z_i=0$.}

Our setting is also related to the literature on multi-valued treatments and lower dimensional instrumental variables (e.g., \cite{ai95,agi_fish,torgovitsky2015, dhault_lowZ, masten_torgovsitsky, caetano_escanciano_2021, hull2018isolateing}) and to the literature on fuzzy and instrumented difference-in-differences (e.g., \cite{chaisemartinFDD,hudsonDID, picchetti2022marginal}). Differently from the former, the dynamic structure of our setting allows for alterative identification results exploiting recursiveness. Differently from the latter, we do not explore time variation under parallel trend assumptions. 

Finally, our negative weights result is inserted in the recent developments on two-way fixed effects estimands (\cite{dechaisemartin2020_did, callaway2021,sun2021, GOODMANBACON2021254, ATHEY202262, borusyak2021revisiting}) and IV estimands with covariates (\cite{kolesar2013, blandholetal2022, sloczynski2022}). However, the drivers of negative weights in our setting are different. The recursive solution we discuss mostly resembles \textcite{cellinirdd}'s result on identification of dynamic effects in regression discontinuity designs. However, they only consider the case of regression discontinuity designs that are sharp and focus on a different set of target parameters.

This paper is organized as follows. Section \ref{Sec:2 periods} derives results for two periods, illustrating the principles at work. This includes decomposition results for the RF and IV estimands (Section \ref{Sec:2 periods decomposition}), point identification results (Section \ref{Sec:2 periods point id}), and partial identification results (Section \ref{Sec:2 periods partial id}). Section \ref{Sec:T periods} considers the general multi-period setting. Section \ref{sec:conclusion} provides concluding remarks. Proofs are gathered in the Appendix.

\section{Two-period setting} 
\label{Sec:2 periods}

A setting with two periods illustrates main ideas. Observations are indexed by $i$ and time is indexed by $t \in \{1,2\}$. We are interested in identifying dynamic effects of a binary treatment $D_{i,t}$ on some outcome $Y_{i,t}$. No unit is treated before the first period. There is selection into treatment, but we observe a time-invariant binary instrument $Z_i$.

Treatment is irreversible: once an observation is treated, it will be treated for all following periods. This is a common assumption in the difference-in-differences literature, and is known as staggered treatment adoption (e.g., \cite{callaway2021, sun2021, ATHEY202262, borusyak2021revisiting}). 

\begin{assumption}[Irreversible Treatment]
\label{irreverse}
$D_{i,1}=1\implies D_{i,2}=1$ almost surely (a.s.). 
\end{assumption}

Because treatment is irreversible, any possible sequence of treatment statuses at time $t$ can be identified by zero if the observation has never been treated and by $(1,\tau)$ if the observation's first period of treatment was $t-\tau$. At $t=1$ observations may have treatment status $0$ (not treated at $t=1$) or $(1,0)$ (treated at $t=1$). In this case,  $\tau=0$ indicates that treatment length is zero, because the treatment started at $t=1$, and we are considering the observation at $t=1$. At $t=2$, in addition to treatment status 0, we may have $(1,1)$ (treated at $t=1$, so $\tau=1$ means that at $t=2$ the length of the treatment is 1) or $(1,0)$ (treated at $t=2$).

Let $Y_{i,t}(0,z)$ denote the potential outcome when observation $i$ is not treated at $t$ and was instrument assigned to $z$, while $Y_{i,t}(1,\tau,z)$ is the potential outcome when $i$ is first treated at $t-\tau$ and assigned by the instrument to $z$. Potential treatment statuses at $t$ are denoted by $D_{i,t}(z)$. Also, $AT_t$ denotes always-takers at $t$ (observations such that $D_{i,t}(1)=D_{i,t}(0)=1$), $C_t$ denotes compliers at $t$ (observations such that $D_{i,t}(1)>D_{i,t}(0)$), $F_t$  denotes defiers at $t$ (observations such that $D_{i,t}(1)<D_{i,t}(0)$) and $NT_t$ denotes never-takers at $t$ (observations such that $D_{i,t}(1)=D_{i,t}(0)=0$). 

In principle, there could be 16 latent groups, which are combinations of $(AT_t,C_t,F_t, NT_t)$ for the two periods. However, Assumption \ref{irreverse} restricts these possibilities. In particular, the group $AT_1$ must also be $AT_2$. Moreover, the group $C_1$ must be either $AT_2$ (in case those with $Z_i=0$ become treated in the second period) or $C_2$ (in case they remain untreated in the second period). In contrast, the group $NT_1$ can be any of the four possible latent groups in the second period even when treatment is irreversible. We say compliance is dynamic when there exist observations whose latent groups change over time.  Otherwise,  compliance is defined as static. Compliance is static if, for example, treatment is only accessed in the first period.

For each $t\in \{1,2\}$, define
\begin{equation}
\label{reducedform}
    RF_t\coloneqq
    \E\left[Y_{i,t}\middle|Z_i=1\right]
    -\E\left[Y_{i,t}\middle|Z_i=0\right]
\end{equation} 
and 
\begin{equation}
\label{firststage}
    FS_t\coloneqq
    \E\left[D_{i,t}\middle|Z_i=1\right]
    -\E\left[D_{i,t}\middle|Z_i=0\right],
\end{equation}
the per-period reduced form and first stage estimands at $t$, respectively. Thus, whenever $FS_t\neq0$, the per-period IV estimand at $t$ is $RF_t/FS_t$.

As a first requirement for $Z_i$ to be considered a valid instrument, we consider a dynamic extension of the standard IV assumptions of \textcite{ai94} and \textcite{air96}. The main difference from the assumptions in the static case is that we add independence and exclusion conditions in all periods. Note that relevance and monotonicity assumptions are only required in the first period.

\begin{assumption}
\label{basicvalidity}
The following hold:
\begin{enumerate}
    \item \textit{Exclusion}: For each  $z\in\{0,1\}$, $Y_{i,t}(0, z)=Y_{i,t}(0)$ and $Y_{i,t}(1,0, z)=Y_{i,t}(1,0)$ for $t \in \{1,2\}$, and $Y_{i,2}(1,1, z)=Y_{i,2}(1,1)$.
    
    \item \textit{Independence}: $\big(Y_{i,1}(0), Y_{i,1}(1,0), Y_{i,2}(0), Y_{i,2}(1,0),Y_{i,2}(1,1), D_{i,1}(1), D_{i,1}(0),D_{i,2}(1), D_{i,2}(0)\big)$ is independent of $Z_i$.
    \item \textit{Relevance at $t=1$}: $FS_1\neq0$.
    \item \textit{Monotonicity at $t=1$}: $\mathbb{P}(F_1)=0$.
\end{enumerate}
\end{assumption}

Our focus will be on comparisons between treated and untreated potential outcomes. Thus, the building blocks for decomposing the per-period reduced form estimands are causal effects of the form\footnote{Whenever written, expectations are assumed to exist.}
\begin{equation}
\label{treatmenteffects}
    \Delta_t^\tau(g)\coloneqq
    \E\left[Y_{it}(1, \tau)-Y_{it}(0)\mid g\right],
\end{equation}
where $g$ specifies a history of IV latent types. For example, an observation that is only treated in the first period if $Z_i=1$ but, in the second period, gets treated regardless of $Z_i$ belongs to $g=(C_1, AT_2)$. In this case, $ \Delta_2^0(C_1, AT_2)$ is the treatment effect for this group of observations at $t=2$ when they receive treatment at $t=2$.  Note that there are three types of time heterogeneity in these treatment effects. The first one is with respect to the calendar time $t$, the second one is with respect to the treatment length $\tau$, while the third one is with respect to the latent group. 

We focus on target parameters of the type $\Delta_t^{t-1}(C_1)$, which we {term} ``dynamic LATEs''. These are the local average treatment effects at time $t$, when treatment started at $t=1$, for first-period compliers ($C_1$). For the comparison of effects across time to be valid, it is important that the IV latent type for which the causal effect is identified does not change. On the contrary, differences in effects across time cannot be solely attributed to time heterogeneity.  

Given the notation above, it follows directly from \textcite{ai94} that $\Delta^0_1(C_1)$ is identified by the first period IV estimand under Assumption \ref{basicvalidity}. Moreover, in case of static compliance, Assumptions \ref{irreverse} and \ref{basicvalidity} imply that the IV estimand in the second period identifies $\Delta^1_2(C_1)$, the effect at $t=2$ of being treated at $t=1$ for $C_1$ observations. The argument for identification is analogous to the one for the first period.

\subsection{Decomposition of RF and IV estimands} 
\label{Sec:2 periods decomposition}

While, under Assumptions \ref{irreverse} and \ref{basicvalidity},  the  IV estimands recover the dynamic LATEs when there is static compliance, the second-period IV estimand generally does not recover  $\Delta_2^{1}(C_1)$  when there is dynamic compliance. 

Figure \ref{figsecondperiod} depicts the remaining latent groups at $t=2$ once latent groups not consistent with irreversible treatment and first-period defiers are excluded (Assumptions \ref{irreverse} and \ref{basicvalidity}).  It is clear that the averages for $g = (AT_1,AT_2)$ cancel out in $RF_2 = \E[Y_{i,2}|Z_i=1] - \E[Y_{i,2}|Z_i=0]$ because the observed outcomes for them are the same potential outcomes regardless of $Z_i$. The same is true for $g=(NT_1,AT_2)$ and $g=(NT_1,NT_2)$.

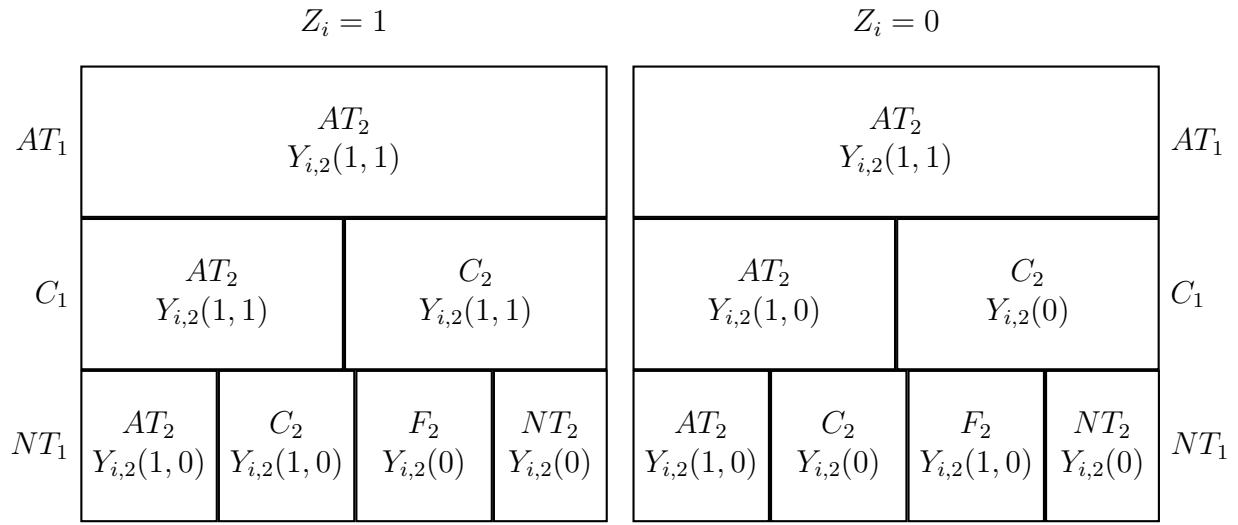
\begin{figure}[h]
\caption{Latent groups and potential outcomes when $Z_i=1$ and when $Z_i=0$.}
    \begin{tikzpicture}[mybox/.style={minimum width=6.98237cm,draw,thick,align=center,minimum height=2cm}, mybox2/.style={minimum width=1.8cm,draw,thick,align=center,minimum height=2cm}, gravity/.style={minimum width=2cm,align=center,minimum height=2cm}, mybox3/.style={minimum width=3.4765cm,draw,thick,align=center,minimum height=2cm}, mybox4/.style={minimum width=1.5cm,draw,thick,align=center,minimum height=2cm}]
\node[left, mybox,label=left:$AT_1$] (Z1AT1) {$AT_2$\\$Y_{i,2}(1,1)$};
\node[above=0.25cm of Z1AT1] (Z1) {$Z_i\eq1$};
\node[below=0.0065cm of Z1AT1, gravity] (gravity4) {};
\node[left=-1.007cm of gravity4, mybox3, label=left:$C_1$] (Z1C1AT2) {$AT_2$\\$Y_{i,2}(1,1)$};
\node[right=0cm of Z1C1AT2, mybox3] (Z1C1C2) {$C_2$\\$Y_{i,2}(1,1)$}; 
\node[below=0cm of Z1C1AT2, gravity] (gravity1) {};
\node[left=-1.0825cm of gravity1, mybox2, label=left:$NT_1$] (Z1NT1AT2) {$AT_2$\\$Y_{i,2}(1,0)$};
\node[right=0cm of Z1NT1AT2, mybox2] (Z1NT1C2) {$C_2$\\$Y_{i,2}(1,0)$};
\node[right=0cm of Z1NT1C2, mybox2] (Z1NT1F2) {$F_2$\\$Y_{i,2}(0)$};
\node[right=0cm of Z1NT1F2, mybox4] (Z1NT1NT2) {$NT_2$\\$Y_{i,2}(0)$};
%
\node[right=0.33cm of Z1AT1, mybox,label=right:$AT_1$] (Z0AT1) {$AT_2$\\$Y_{i,2}(1,1)$};
\node[above=0.25cm of Z0AT1] (Z0) {$Z_i\eq0$};
\node[below=0.0065cm of Z0AT1, gravity] (gravity2) {};
\node[left=-1.007cm of gravity2, mybox3] (Z0C1AT2) {$AT_2$\\$Y_{i,2}(1,0)$};
\node[right=0cm of Z0C1AT2, mybox3, label=right:$C_1$] (Z0C1C2) {$C_2$\\$Y_{i,2}(0)$};
\node[below=0cm of Z0C1AT2, gravity] (gravity3) {};
\node[left=-1.0825cm of gravity3, mybox2] (Z0NT1AT2) {$AT_2$\\$Y_{i,2}(1,0)$};
\node[right=0cm of Z0NT1AT2, mybox2] (Z0NT1C2) {$C_2$\\$Y_{i,2}(0)$};
\node[right=0cm of Z0NT1C2, mybox2] (Z0NT1F2) {$F_2$\\$Y_{i,2}(1, 0)$};
\node[right=0cm of Z0NT1F2, mybox4, label=right:$NT_1$] (Z0NT1NT2) {$NT_2$\\$Y_{i,2}(0)$};
\end{tikzpicture}
\label{figsecondperiod}
\end{figure}

Therefore, $RF_2$ captures the comparisons for remaining latent groups. The main problem, however, is that for some of those groups the difference in observed outcomes between those with $Z_i=1$ and $Z_i=0$ does not represent a difference between potential outcomes $Y_{i,2}(1,1)$ and $Y_{i,2}(0)$. In particular,
\begin{equation*}
\label{contC1AT2}
    \E[Y_{i,2} | Z_i=1,C_1] - \E[Y_{i,2} | Z_i=0,C_1] = \Delta_2^1(C_1) - \Pb(AT_2 \mid C_1) \Delta_2^0(C_1,AT_2).
\end{equation*}
Moreover, the differences in expected outcomes for the groups $(NT_1,C_2)$ and $(NT_1,F_2)$ equal a causal effect of treatment length zero. The following proposition characterizes the $RF_2$ and $FS_2$ estimands when there is dynamic compliance.

\begin{proposition}
\label{decomposition2}
    Under Assumptions \ref{irreverse} and \ref{basicvalidity}, 
    \begin{equation}
    \label{decompRF2}
    \begin{split}
    RF_2&=
    \Pb(C_1)\Delta^1_2(C_1)\\
    &-\Pb(C_1,AT_2)
    \Delta^0_2(C_1,AT_2)
    -\Pb(NT_1,F_2)
    \Delta^0_2(NT_1,F_2)\\
    &+\Pb(NT_1,C_2)
    \Delta^0_2(NT_1,C_2)
    \end{split}
    \end{equation} 
    and
    \begin{equation}
    \label{decompFS2}
    FS_2=\Pb(C_1)-\Pb(C_1,AT_2)
        -\Pb(NT_1,F_2)+\Pb(NT_1,C_2).
    \end{equation}
\end{proposition}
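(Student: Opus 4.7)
The plan is to decompose each conditional expectation in $RF_2$ and $FS_2$ using the law of total expectation over latent groups, then use independence and exclusion to rewrite each piece in terms of potential outcomes. Concretely, for $z\in\{0,1\}$ I would write
\[
\E[Y_{i,2}\mid Z_i=z]=\sum_{g}\Pb(g\mid Z_i=z)\,\E[Y_{i,2}\mid Z_i=z,g],
\]
where $g$ ranges over latent types $(g_1,g_2)\in\{AT,C,F,NT\}^2$, and invoke independence from Assumption \ref{basicvalidity} to drop the conditioning in $\Pb(g\mid Z_i=z)=\Pb(g)$. Assumption \ref{irreverse} rules out any $g$ with $g_1=AT_1,g_2\in\{C_2,F_2,NT_2\}$ and with $g_1=C_1,g_2\in\{F_2,NT_2\}$, while $\Pb(F_1)=0$ removes the four types starting with $F_1$, leaving exactly the seven cells depicted in Figure \ref{figsecondperiod}.

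Next I would go cell-by-cell and read off the realized $Y_{i,2}$ (using exclusion to drop the $z$-argument from potential outcomes). The three groups $(AT_1,AT_2)$, $(NT_1,AT_2)$, and $(NT_1,NT_2)$ produce the same potential outcome under both arms, so they contribute $0$ to $RF_2$. For each of the remaining four groups, subtracting the $Z_i=0$ expectation from the $Z_i=1$ expectation gives, respectively,
\[
\Pb(C_1,AT_2)\bigl[\Delta^1_2(C_1,AT_2)-\Delta^0_2(C_1,AT_2)\bigr],\ \Pb(C_1,C_2)\,\Delta^1_2(C_1,C_2),
\]
\[
\Pb(NT_1,C_2)\,\Delta^0_2(NT_1,C_2),\qquad -\Pb(NT_1,F_2)\,\Delta^0_2(NT_1,F_2),
\]
where the sign flip in the last line comes from the fact that an $F_2$ observation is treated when $Z_i=0$ and untreated when $Z_i=1$. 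Finally I would combine the $(C_1,AT_2)$ and $(C_1,C_2)$ terms with weighted potential outcomes $Y_{i,2}(1,1)$ via the identity
\[
\Pb(C_1,AT_2)\,\Delta^1_2(C_1,AT_2)+\Pb(C_1,C_2)\,\Delta^1_2(C_1,C_2)=\Pb(C_1)\,\Delta^1_2(C_1),
\]
which follows because $C_1$ partitions into $(C_1,AT_2)\cup(C_1,C_2)$ under Assumption \ref{irreverse}. Collecting terms yields \eqref{decompRF2}.

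The argument for \eqref{decompFS2} is the same exercise with $D_{i,2}$ in place of $Y_{i,2}$: the only groups with $D_{i,2}(1)\neq D_{i,2}(0)$ are $(C_1,C_2)$ (contributing $+\Pb(C_1,C_2)$), $(NT_1,C_2)$ (contributing $+\Pb(NT_1,C_2)$), and $(NT_1,F_2)$ (contributing $-\Pb(NT_1,F_2)$), and rewriting $\Pb(C_1,C_2)=\Pb(C_1)-\Pb(C_1,AT_2)$ gives the stated expression.

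The main obstacle is purely bookkeeping rather than conceptual: correctly identifying, for each of the seven admissible latent histories, which potential outcome $Y_{i,2}(0)$, $Y_{i,2}(1,0)$, or $Y_{i,2}(1,1)$ is realized under each value of $Z_i$, and in particular tracking the sign reversal on the defier cell $(NT_1,F_2)$. Once that table is set up correctly (as in Figure \ref{figsecondperiod}), everything reduces to elementary arithmetic and one application of the partition $C_1=(C_1,AT_2)\sqcup(C_1,C_2)$.
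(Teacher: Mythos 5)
Your proof is correct: the seven-cell enumeration under Assumptions \ref{irreverse} and \ref{basicvalidity}, the use of independence to equate group shares across arms and exclusion to drop $z$ from potential outcomes, the cancellation of the $(AT_1,AT_2)$, $(NT_1,AT_2)$, $(NT_1,NT_2)$ cells, the sign flip on $(NT_1,F_2)$, and the recombination $\Pb(C_1,AT_2)\Delta^1_2(C_1,AT_2)+\Pb(C_1,C_2)\Delta^1_2(C_1,C_2)=\Pb(C_1)\Delta^1_2(C_1)$ via the partition $C_1=(C_1,AT_2)\cup(C_1,C_2)$ all check out, as does the first-stage bookkeeping with $\Pb(C_1,C_2)=\Pb(C_1)-\Pb(C_1,AT_2)$. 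This is essentially the paper's argument: the paper proves the general $T$-period Proposition \ref{rffsestimands} by the same latent-group decomposition (organized recursively over switching histories on the $Z_i=0$ arm and recombined by iterated expectations) and obtains Proposition \ref{decomposition2} as the $T=2$ special case, which coincides with your direct two-period computation.
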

\begin{proof}
Special case of Proposition \ref{rffsestimands}.
\end{proof}

Equation \eqref{decompRF2} shows that $RF_2$ depends on the dynamic LATE of interest at $t=2$, $\Delta^1_2(C_1)$, but also on the effects for some groups that switch into treatment in the second period. In particular, because the $(C_1, AT_2)$ and $(NT_1, F_2)$ get treated at $t=2$ only when $Z_i=0$, the causal effect for them is negatively weighted. A negative weight for the $(C_1, AT_2)$ group is specially relevant because it implies that assuming no defiers in all periods is not sufficient to avoid negative weights. In fact, the decomposition for the $FS_2$ in Equation \eqref{decompFS2} shows that whenever $FS_2<FS_1=\Pb(C_1)$, there must be negative weights in $RF_2$ regardless of assumptions  on the existence of specific latent groups. More generally, for settings with $T$ periods,  Corollary \ref{IVestimand} shows that if there is a period in which the first stage is strictly smaller than in the period before, then there must be negative weights in the reduced form of current and future periods.

Equation \eqref{decompRF2} also indicates a typical case in which there might be sign reversal in the sense that all causal effects have the opposite sign of $RF_2$. Ignoring the $NT_1$'s in $RF_2$ for the sake of the argument, if effects fade out sufficiently fast with respect to the treatment length dimension, then the term related to $(C_1, AT_2)$ in $RF_2$ could be larger than the term related to $C_1$. For example, for the effects of children on parents' labor supply the treatment length dimension is the age of the child. Thus, if effects are always negative but decrease (in absolute value) when children get older, the reduced form estimand could be positive.

Given this decomposition for the reduced form and for the first stage, the decomposition for the IV estimand at $t=2$ is immediate. Corollary \ref{IVestimand2} summarizes its main characteristics. The two main takeaways are that negative weights in $RF_2$ imply negative weights in the IV estimand and that the weights in the IV estimand sum to one.

\begin{corollary}
\label{IVestimand2}
Under Assumptions \ref{irreverse} and \ref{basicvalidity}, if $FS_2\neq0$, $RF_2/FS_2$ is a linear combination  of the causal effects in Equation \eqref{decompRF2} in which the weights sum to one but some of them may be negative. There must be negative weights whenever $FS_2<FS_1$. Moreover, the causal effects that are negatively weighted in $RF_2/FS_2$ are the same as in $RF_2$ if, and only if, $FS_2>0$. 
\end{corollary}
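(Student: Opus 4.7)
The plan is to derive every part of Corollary \ref{IVestimand2} as an immediate arithmetic consequence of Proposition \ref{decomposition2}. I will divide equation \eqref{decompRF2} term-by-term by the (nonzero) scalar $FS_2$, producing the four weights $\Pb(C_1)/FS_2$, $-\Pb(C_1,AT_2)/FS_2$, $-\Pb(NT_1,F_2)/FS_2$, and $\Pb(NT_1,C_2)/FS_2$ on $\Delta^1_2(C_1)$, $\Delta^0_2(C_1,AT_2)$, $\Delta^0_2(NT_1,F_2)$, $\Delta^0_2(NT_1,C_2)$ respectively. Their sum is $FS_2/FS_2 = 1$ directly from equation \eqref{decompFS2}, which establishes the first assertion.

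For the claim that $FS_2 < FS_1$ forces negative weights, I will use that Assumption \ref{basicvalidity} (no defiers at $t=1$) gives $FS_1 = \Pb(C_1)$, so rewriting \eqref{decompFS2} as $FS_2 - \Pb(C_1) = \Pb(NT_1,C_2) - \Pb(C_1,AT_2) - \Pb(NT_1,F_2)$ yields $\Pb(C_1,AT_2) + \Pb(NT_1,F_2) > \Pb(NT_1,C_2) \geq 0$ whenever $FS_2 < FS_1$. I then split into two subcases: if $FS_2 > 0$, at least one of $\Pb(C_1,AT_2)$ and $\Pb(NT_1,F_2)$ must be strictly positive, so the corresponding weight $-\Pb(\cdot)/FS_2$ is strictly negative; if $FS_2 < 0$, relevance (which forces $\Pb(C_1) = FS_1 > 0$) makes $\Pb(C_1)/FS_2 < 0$. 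Either way, some weight is negative.

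For the final ``if and only if'', I observe that the causal effects negatively weighted in $RF_2$ are precisely those accompanied by the strictly negative coefficients $-\Pb(C_1,AT_2)$ and $-\Pb(NT_1,F_2)$ (when these probabilities are positive). If $FS_2 > 0$, dividing through preserves every sign, so the sets of negatively-weighted effects coincide. Conversely, if $FS_2 < 0$, every previously negative coefficient becomes nonnegative while $\Pb(C_1)/FS_2 < 0$ enters as a new strictly negative weight (using $\Pb(C_1) > 0$), so the sets must differ. This yields both directions.

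The main obstacle is essentially none beyond bookkeeping; the only subtle point is confirming the converse in the iff when some latent-group probabilities might vanish, which is handled cleanly by the strict positivity of $\Pb(C_1)$ guaranteed by the relevance condition in Assumption \ref{basicvalidity}.
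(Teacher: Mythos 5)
Your proof is correct and follows essentially the same route as the paper, which proves the general $T$-period version (Corollary \ref{IVestimand}) by dividing the reduced-form decomposition by the first stage, noting the probabilities sum to $FS_t$, tying $FS_k<FS_{k-1}$ to a strictly positive probability on a negatively-signed group, and using the sign of $FS_t$ for the ``if and only if''. Your explicit case split on the sign of $FS_2$ (using $\Pb(C_1)=FS_1>0$ when $FS_2<0$) is in fact a slightly more careful rendering of the paper's terser sign argument.
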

\begin{proof}
    Special case of Corollary \ref{IVestimand}.
\end{proof}

Given the results above, it is straightforward to consider assumptions under which the second period IV estimand recovers $\Delta^1_2(C_1)$. One case is when compliance is static. In this case, observations do not change treatment status from the first period to the second, implying
\begin{equation*}
\Pb(C_1, AT_2)=\Pb(NT_1, C_2)=\Pb(NT_1, F_2)=0,
\end{equation*}
and so $RF_2$ reduces to $\Pb(C_1)\Delta^1_2(C_1)$ while $FS_2=\Pb(C_1)$.  
However, this is not the only case in which the IV estimand works. Assumption \ref{Assumption_static_IV} formalizes types of treatment effects homogeneities which guarantee that the IV estimand at $t=2$ identifies a causal effect.

\begin{assumption} \label{Assumption_static_IV}
For any latent group $g \in \left\{(C_1,AT_2),(NT_1,C_2),(NT_1,F_2)\right\}$ such that $\Pb(g)>0$, $\Delta_2^1(C_1) = \Delta_2^0(g)$.
    
\end{assumption}

\begin{corollary}
\label{Ivestimand:tauhomo}
Suppose Assumptions \ref{irreverse} and \ref{basicvalidity} hold. Under Assumption \ref{Assumption_static_IV}, and if $FS_2\neq0$, 
\begin{equation*}
    \Delta^1_2(C_1)=\frac{RF_2}{FS_2}.
\end{equation*}
\end{corollary}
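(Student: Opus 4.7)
The plan is to substitute Assumption \ref{Assumption_static_IV} directly into the decomposition of $RF_2$ provided by Proposition \ref{decomposition2}, and then recognize the resulting expression as $\Delta^1_2(C_1)$ times the decomposition of $FS_2$.

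First, I would invoke Proposition \ref{decomposition2} to obtain the expressions for $RF_2$ and $FS_2$ in Equations \eqref{decompRF2} and \eqref{decompFS2}. Next, I would observe that each of the three ``nuisance'' summands in \eqref{decompRF2} has the form $\Pb(g)\,\Delta^0_2(g)$ for $g \in \{(C_1,AT_2),(NT_1,C_2),(NT_1,F_2)\}$. By Assumption \ref{Assumption_static_IV}, whenever $\Pb(g)>0$ we may replace $\Delta^0_2(g)$ with $\Delta^1_2(C_1)$; when $\Pb(g)=0$ the entire term vanishes and the replacement is vacuous. Either way, each nuisance summand equals $\Pb(g)\,\Delta^1_2(C_1)$.

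Factoring $\Delta^1_2(C_1)$ out of the modified decomposition of $RF_2$ yields
\begin{equation*}
RF_2 = \Delta^1_2(C_1)\left[\Pb(C_1)-\Pb(C_1,AT_2)-\Pb(NT_1,F_2)+\Pb(NT_1,C_2)\right] = \Delta^1_2(C_1)\cdot FS_2,
\end{equation*}
where the last equality uses \eqref{decompFS2}. Dividing by $FS_2\neq0$ delivers the claim.

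There is no real obstacle here: the work was done in Proposition \ref{decomposition2}, and Assumption \ref{Assumption_static_IV} is engineered precisely to collapse the three extraneous terms. The only minor point to flag is handling the $\Pb(g)=0$ case so the proof does not appear to require the equality $\Delta^1_2(C_1)=\Delta^0_2(g)$ to hold for latent groups of measure zero, but this is immediate since those summands drop out of \eqref{decompRF2} and \eqref{decompFS2} symmetrically.
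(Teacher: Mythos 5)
Your proof is correct and is exactly the argument the paper has in mind: the paper dismisses this corollary as immediate from Proposition \ref{decomposition2}, and you spell out that immediacy by substituting Assumption \ref{Assumption_static_IV} into Equation \eqref{decompRF2}, factoring out $\Delta^1_2(C_1)$, and identifying the bracketed sum of probabilities with $FS_2$ via Equation \eqref{decompFS2}. Your remark on the $\Pb(g)=0$ case is a sensible, if minor, added precision.
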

\begin{proof}
    This result is immediate given Proposition \ref{decomposition2}.
\end{proof}

Assumption \ref{Assumption_static_IV} is trivially satisfied if treatment effects are fully homogeneous (that is, with respect to treatment length, calendar time, and latent group). More generally, it says that for groups contaminating $RF_2$, average treatment effects at $t=2$ must be the same as the LATE at $t=2$ for first-period compliers (who were treated at $t=1$). This condition encompasses two sources of treatment effects homogeneity. First, it requires that treatment effects do not depend on the time since those observations have been treated. This condition is arguably too strong in many settings. For example, as already discussed,  effects of fertility on labor supply {are most likely stronger} when the treatment length is smaller. Likewise,  training programs {likely} have negative effects in the beginning (while subjects are still taking classes), and then positive effects afterward. Second, Assumption \ref{Assumption_static_IV} requires  treatment effects  for latent groups that contaminate $RF_2$  to be the same as for first-period compliers. On the other hand, note that Assumption \ref{Assumption_static_IV} does not impose restrictions on the possibility that treatment effects vary with calendar time. Corollary \ref{Assumption_static_IV} is analogous to Theorem 3 by \textcite{ischemia}.

\begin{remark}
    Defining potential outcomes as $\widetilde Y_{i,t}(1,z)$ when observation $i$ is treated in the initial period and $\widetilde Y_{i,t}(0,z)$ otherwise would not be a valid solution without further assumptions. In this case, $\widetilde Y_{i,t}(0,z)$ would depend on $z$ if compliance was dynamic, so the usual IV exclusion restriction would not be valid for this definition of potential outcomes. For example, the instrument directly affects the potential outcome $\widetilde Y_{i,2}(0,z)$ for $(NT_1, C_2)$ observations because they are treated at $t=2$ only when $Z_i=1$.
\end{remark}

\subsection{Point identification of dynamic LATEs}  
\label{Sec:2 periods point id}

Dynamic LATEs can be identified without restricting heterogeneity with respect to the treatment length dimension. This comes at the cost of imposing homogeneity with respect to calendar time. Assumption \ref{Assumption_id} formalizes this alternative homogeneity assumption.

\begin{assumption} 
\label{Assumption_id}
For any latent group $g \in \left\{(C_1,AT_2),(NT_1,C_2),(NT_1,F_2)\right\}$ such that $\Pb(g)>0$, $\Delta_1^0(C_1) = \Delta_2^0(g)$. 
\end{assumption}

Assumption \ref{Assumption_id} says that for groups contaminating $RF_2$, average treatment effects at $t=2$ must be the same as the first-period LATE. The main difference from Assumption \ref{Assumption_static_IV} is  {the change in} the type of time heterogeneity. To understand the economic difference of these assumptions, it is useful to go back to the training program case. If, for example, the outcome of interest is employment, then causal effects most likely depend on whether the economy is in a recession or in a boom phase. Thus, homogeneity with respect to calendar time would be a strong assumption in a period of strong economic fluctuations. On the other hand, in periods of economic stability, it could be reasonable to assume that effects do not depend on calendar time. Therefore, at least when {the economy is stable}, Assumption \ref{Assumption_id} should be more palatable than Assumption \ref{Assumption_static_IV} in these applications.

The existence of latent groups $(NT_1,C_2)$ and $(NT_1,F_2)$ depends crucially on the empirical setting. Once more, consider the training program example. Suppose first that  being lottery assigned to treatment implies that admission is guaranteed not only in the current period, but also in the following ones. In this case, some of the $NT_1$ observations might get treated in the second period only when they have a guaranteed admission (in this case, when they have $Z_i=1$). Therefore, we should expect $\Pb(NT_1,C_2)>0$. It is also conceivable to have empirical applications in which there are second-period defiers, even when {there are}  no first-period defiers. For example, imagine a setting in which those lottery assigned to treatment that refuse training in the first period cannot be trained in the second period. In that case, all first-period never-takers with $Z_i=1$ would not be trained in the second period, but some with $Z_i=0$ might. In this case, we would expect $\Pb(NT_1,F_2)>0$.

Alternatively, suppose the lottery in the initial period does not guarantee admission in the following periods, and that first-period never-takers do not receive different information depending on their $Z_i$. In this case, it would be more reasonable to assume that second-period take-up for $NT_1$ does not depend on instrument assignment, so $\Pb(NT_1,C_2)=\Pb(NT_1,F_2)=0$. Therefore, in these settings,  $\Delta_1^0(C_1) = \Delta_2^0(C_1,AT_2)$ suffices for identification. The same is true for settings with no $NT_1$ observations, which is the case when all observations are treated in the first period when $Z_i=1$.

Since $\Delta^0_1(C_1)$ is identified, it is possible to identify the contamination term of the reduced form estimand under Assumption \ref{Assumption_id}, and identify $\Delta^1_2(C_1)$ by correcting for the bias in $RF_2$. 

\begin{proposition}
\label{identification2}
    Suppose Assumptions \ref{irreverse} and \ref{basicvalidity} hold. Under Assumption \ref{Assumption_id}, 
    \begin{equation}
    \label{solvedelta2}
    \Delta^1_2(C_1)
    =\frac{RF_2}{FS_1}
    +\frac{\big(FS_1-FS_2\big)}{FS_1}
    \frac{RF_1}{FS_1}.
    \end{equation}
\end{proposition}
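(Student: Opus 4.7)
The plan is to start from the decomposition of $RF_2$ in Proposition \ref{decomposition2}, apply Assumption \ref{Assumption_id} to collapse the three contamination terms into a single multiple of $\Delta^0_1(C_1)$, and then recognize that the combined coefficient coincides with $FS_1 - FS_2$. Solving for $\Delta^1_2(C_1)$ and substituting $\Delta^0_1(C_1) = RF_1/FS_1$ gives the stated closed form.

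First, I would take Equation \eqref{decompRF2} and use Assumption \ref{Assumption_id} to substitute $\Delta^0_2(g) = \Delta^0_1(C_1)$ for every $g \in \{(C_1,AT_2),(NT_1,C_2),(NT_1,F_2)\}$ with $\Pb(g) > 0$ (terms with $\Pb(g) = 0$ drop out trivially, so the substitution is valid under Assumption \ref{Assumption_id} as stated). This yields
\begin{equation*}
RF_2 = \Pb(C_1)\Delta^1_2(C_1) - \bigl[\Pb(C_1,AT_2) + \Pb(NT_1,F_2) - \Pb(NT_1,C_2)\bigr]\Delta^0_1(C_1).
\end{equation*}

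Next, I would use the first-stage decomposition in Equation \eqref{decompFS2} to rewrite the bracketed quantity as $\Pb(C_1) - FS_2$. Since Assumption \ref{basicvalidity} rules out first-period defiers, we have $FS_1 = \Pb(C_1)$, so the bracket equals $FS_1 - FS_2$. Together with the standard first-period IV identification $\Delta^0_1(C_1) = RF_1/FS_1$ (which follows from \textcite{ai94} under Assumptions \ref{irreverse} and \ref{basicvalidity}, and for which $FS_1 \neq 0$ is guaranteed by relevance), the display above becomes
\begin{equation*}
RF_2 = FS_1\,\Delta^1_2(C_1) - (FS_1 - FS_2)\,\frac{RF_1}{FS_1}.
\end{equation*}
Dividing through by $FS_1$ and rearranging gives Equation \eqref{solvedelta2}.

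There is no real obstacle here beyond careful bookkeeping of signs: the only subtle point is that the $(NT_1,C_2)$ term enters $RF_2$ and $FS_2$ with the opposite sign from the $(C_1,AT_2)$ and $(NT_1,F_2)$ terms, so one must verify that the same linear combination of probabilities appears in both decompositions before collapsing them into $FS_1 - FS_2$. Once this is checked, the algebra is immediate and no further assumptions (in particular, no sign restrictions on $FS_2$) are needed.
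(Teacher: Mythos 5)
Your proof is correct and follows essentially the same route as the paper: the paper proves Proposition \ref{identification2} as the $T=2$ special case of Proposition \ref{identification}, whose proof likewise substitutes the calendar-time homogeneity assumption into the reduced-form decomposition, recognizes the contamination coefficient as $\rho_2 = FS_1 - FS_2$, and solves the resulting (triangular) linear system using $RF_1 = FS_1\,\Delta^0_1(C_1)$. Your direct two-period algebra, including the sign bookkeeping for $(NT_1,C_2)$ and the observation that $FS_2 \neq 0$ is not needed, is exactly that argument specialized, so there is nothing to add.
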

\begin{proof}
    Special case of Proposition \ref{identification}.
\end{proof}

Therefore, Proposition \ref{identification2} provides an alternative way to identify dynamic LATEs that (relative to the per-period IV estimator) relies on  more reasonable assumptions in many settings. Moreover, in contrast to the per-period IV estimand for $t=2$, the identification result in Proposition \ref{identification2} requires relevance only in the first period (that is, it could be that $FS_2=0$).

\textcite{ischemia} use wave ignorability to identify average exposure effects in ISCHEMIA. Proposition \ref{identification2} extends this to settings with defiers after the first period. The cost is requiring an additional treatment effect homogeneity in case $\mathbb{P}(NT_1,F_2)>0$. The recursive correction in \eqref{solvedelta2} can be automated by the linear two-stage least squares regression considered in \textcite{ischemia}'s Theorem 2.

\begin{remark}
Given the decomposition results from Proposition \ref{decomposition2}, it is possible to adapt the solution we propose in this section to other settings in which  more information is available. For example, suppose {there is a second}  lottery at $t=2$ that is independent from the first-period lottery, and let $\widetilde C_2$ be the compliers of this second lottery.\footnote{Observations who participated in the first-period lottery may self select into participating in the second-period lottery. Moreover, lottery participants in this second-period lottery may also include observations who did not participate in the first-period lottery.} In this case, $\Delta_2^0(\widetilde C_2)$ is identified. Therefore,{it can be used to} correct the contamination term  (instead of $\Delta_1^0( C_1)$) assuming that, for any latent group $g \in \left\{(C_1,AT_2),(NT_1,C_2),(NT_1,F_2)\right\}$ such that $\Pb(g)>0$, $\Delta_2^0(\widetilde C_2) = \Delta_2^0(g)$ (instead of Assumption \ref{Assumption_id}). In this case, {heterogeneity with respect to $t$ and $\tau$ is unrestricted}, but {there still are cross-group homogeneity restrictions}.
\end{remark}

\begin{remark}
Our framework can be extended to analyses of the causal effects of charter schools (\cite{AADKP2011,DF2011,GCD2011,ACDPW2016,AAHP2016}). For example, define potential outcome $Y_{i,t}(s,\tilde t)$ for a student $i$ at time $t$ were he/she enrolled in a charter school for the first time at time $\tilde t$ in grade $s$. Then we can define causal effects based on comparisons between $Y_{i,t}(s,\tilde t)$ and $Y_{i,t}(0)$, which is the potential outcome had the  student  never enrolled in a charter school until period $t$.\footnote{Note that the way $Y_{i,t}(s,\tilde t)$ is defined does not impose restrictions on the exposure to charter schools after initial enrollment. In this case, the number of years enrolled in a charter school is one of the mechanisms in which the treatment (in this case, being enrolled in a charter school for the first time at time $\tilde t$ in grade $s$) may affect outcomes. In the same way as college enrollment would be a mechanism in which charter school enrollment may affect earnings. An alternative  in this case would be to define potential outcomes as a function of the number of years (or the specific years) in a charter school. Appendix A from \textcite{AAHP2016} presents the interpretation of the IV estimand when the treatment variable is the number of years enrolled in a charter school ($\tilde d$), and potential outcomes are defined as a function of $\tilde d$.} When considering a lottery at $t=1$, we should take into account the possibility that students enroll in a charter school in subsequent periods, and our results can be adapted to this setting.
\end{remark}

\subsection{Partial identification of dynamic LATEs}
\label{Sec:2 periods partial id}

Dynamics LATEs are partially identified without any restriction on the treatment effect heterogeneity when treatment effects are bounded. Bounds for treatment effects are natural in, for example, settings with bounded outcomes (if there exist $\underline{Y}, \overline{Y}\in\mathbb{R}$ such that $\underline{Y}\le Y_{i,2}\le\overline{Y}$ with probability one, then the treatment effects are bounded, in absolute value, by $\overline{Y}-\underline{Y}$).

\begin{proposition}
\label{tebounds2}
Suppose Assumptions \ref{irreverse} and \ref{basicvalidity} hold. If there exist $\underline{\Delta}, \overline{\Delta}\in\mathbb{R}$, with $\underline{\Delta}\le0\le\overline{\Delta}$, such that for all $g\in\left\{(C_1,AT_2),(NT_1,C_2),(NT_1,F_2)\right\}$ with $\Pb(g)>0$, $\underline{\Delta}\le\Delta^0_2(g)\le\overline{\Delta}$, then a lower bound for $\Delta^1_2(C_1)$ is given by
\begin{equation}
\label{generallowerbound2}
\begin{split}
    \frac{RF_2}{FS_1}
    +
    \Pb\left(D_{i,2}>D_{i,1}\middle|Z_i=0\right)
    \frac{\underline{\Delta}}{FS_1}
    -
    \Pb\left(D_{i,2}>D_{i,1}\middle|Z_i=1\right)
    \frac{\overline{\Delta}}{FS_1}
\end{split}
\end{equation}
and an upper bound is given by
\begin{equation}
\label{generalupperbound2}
\begin{split}
    \frac{RF_2}{FS_1}
    +
    \Pb\left(D_{i,2}>D_{i,1}\middle|Z_i=0\right)
    \frac{\overline{\Delta}}{FS_1}
    -
    \Pb\left(D_{i,2}>D_{i,1}\middle|Z_i=1\right)
    \frac{\underline{\Delta}}{FS_1}.
\end{split}
\end{equation} 
If, in addition to the conditions above, for all $g,g'\in\left\{(C_1,AT_2),(NT_1,C_2),(NT_1,F_2)\right\}$ with $\Pb(g)>0$ and $\Pb(g')>0$, $\Delta^0_2(g)=\Delta^0_2(g')$, then 
\begin{equation}
\label{specificlowerbound2}
\frac{RF_2}{FS_1} + 
\Bigg[\mathbf{1}(FS_2\le FS_1)\underline{\Delta}+\mathbf{1}(FS_2> FS_1)\overline{\Delta}\Bigg]\frac{FS_1-FS_2}{FS_1},
\end{equation}
where $\mathbf{1}(\cdot)$ is the indicator function, is a lower bound for $\Delta^1_2(C_1)$ and 
\begin{equation}
\label{specificupperbound2}
\frac{RF_2}{FS_1} 
+\Bigg[\mathbf{1}(FS_2\le FS_1)\overline{\Delta}  + \mathbf{1}(FS_2> FS_1)\underline{\Delta}\Bigg]\frac{FS_1-FS_2}{FS_1}
\end{equation}
is an upper bound. These bounds are (weakly) tighter than the previous ones. 
\end{proposition}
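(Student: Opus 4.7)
The plan is to start from Proposition \ref{decomposition2} and write $\Delta^1_2(C_1)$ as $RF_2/FS_1$ plus an explicit bias term, then bound the bias. Dividing the decomposition of $RF_2$ by $FS_1=\Pb(C_1)$ yields
\begin{equation*}
\Delta^1_2(C_1) = \frac{RF_2}{FS_1} + \frac{1}{FS_1}\bigl[\Pb(C_1,AT_2)\Delta^0_2(C_1,AT_2) + \Pb(NT_1,F_2)\Delta^0_2(NT_1,F_2) - \Pb(NT_1,C_2)\Delta^0_2(NT_1,C_2)\bigr].
\end{equation*}
Applying $\underline\Delta\le\Delta^0_2(g)\le\overline\Delta$ term-by-term, and using the hypothesis $\underline\Delta\le 0\le\overline\Delta$ to attach the correct extreme to each signed weight, produces sharp bounds expressed in the three latent-group probabilities.

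To recast these sharp bounds in observable quantities, I would enumerate the latent types that satisfy $D_{i,2}>D_{i,1}$ under each value of the instrument, which gives
\begin{align*}
\Pb(D_{i,2}>D_{i,1}\mid Z_i=0) &= \Pb(C_1,AT_2)+\Pb(NT_1,AT_2)+\Pb(NT_1,F_2),\\
\Pb(D_{i,2}>D_{i,1}\mid Z_i=1) &= \Pb(NT_1,AT_2)+\Pb(NT_1,C_2).
\end{align*}
Replacing the sharp coefficients of $\overline\Delta$ and $\underline\Delta$ by these two identified probabilities merely adds the non-negative slack $\Pb(NT_1,AT_2)(\overline\Delta-\underline\Delta)/FS_1$ to the upper bound and subtracts the same quantity from the lower bound, preserving both inequalities and yielding \eqref{generallowerbound2}--\eqref{generalupperbound2}. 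This symmetric absorption of the non-identifiable $\Pb(NT_1,AT_2)$ into two identifiable first-stage probabilities is the one slightly non-routine step; it is what the hypothesis $\underline\Delta\le 0\le\overline\Delta$ is used for.

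Under the added cross-group homogeneity, let $\delta$ denote the common value of $\Delta^0_2(g)$. The bias collapses to $\delta\cdot[\Pb(C_1,AT_2)+\Pb(NT_1,F_2)-\Pb(NT_1,C_2)]/FS_1 = \delta(FS_1-FS_2)/FS_1$, where the second equality uses $FS_1=\Pb(C_1,AT_2)+\Pb(C_1,C_2)$ together with the expression for $FS_2$ in Proposition \ref{decomposition2}. Optimizing $\delta$ over $[\underline\Delta,\overline\Delta]$ with two cases on the sign of $FS_1-FS_2$ then delivers \eqref{specificlowerbound2}--\eqref{specificupperbound2}. For the tightening statement, I would directly compare the two upper (and the two lower) bounds; using $FS_1-FS_2 = \Pb(D_{i,2}>D_{i,1}\mid Z_i=0)-\Pb(D_{i,2}>D_{i,1}\mid Z_i=1)$, each difference simplifies in either sign case to a non-negative multiple of $\overline\Delta-\underline\Delta$, namely $\Pb(D_{i,2}>D_{i,1}\mid Z_i=1)(\overline\Delta-\underline\Delta)/FS_1$ or $\Pb(D_{i,2}>D_{i,1}\mid Z_i=0)(\overline\Delta-\underline\Delta)/FS_1$, confirming that the second interval is contained in the first. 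Beyond the absorption step, the proof is sign tracking and arithmetic.
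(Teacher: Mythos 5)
Your proof is correct and follows essentially the same route as the paper's proof of Proposition \ref{tebounds_T} (of which this is the two-period special case): rearrange the decomposition in Proposition \ref{decomposition2}, bound the contamination term, use $\underline{\Delta}\le0\le\overline{\Delta}$ to absorb the unidentified $\Pb(NT_1,AT_2)$ into the observable switching probabilities, and under cross-group homogeneity collapse the bias to a multiple of $FS_1-FS_2$ with a sign case distinction. The direct comparison you use for the tightness claim is a slightly more streamlined version of the paper's sufficient-condition argument but is substantively the same computation.
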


\begin{proof}
    Special case of Proposition \ref{tebounds_T}.
\end{proof}

\begin{remark}
\label{remark_nonexistentgroups}
    Assuming $\Pb(NT_1, C_2)=\Pb(NT_1, F_2)=0$ implies that the conditions in Proposition \ref{tebounds2} for tighter bounds (Equations \eqref{specificlowerbound2} and \eqref{specificupperbound2}) hold. Section \ref{Sec:2 periods point id} discussed settings in which assuming $\Pb(NT_1, C_2)=\Pb(NT_1, F_2)=0$ should be reasonable. In those cases, the tighter bounds hold without any assumption on treatment effect heterogeneity. Moreover, $\Pb(NT_1, C_2)=\Pb(NT_1, F_2)=0$ also implies $FS_2\le FS_1$, so that
    \begin{equation*}
    \frac{RF_2}{FS_1}+
    \frac{FS_1-FS_2}{FS_1}
    \underline{\Delta}
    \le\Delta^1_2(C_1)\le
    \frac{RF_2}{FS_1}+
    \frac{FS_1-FS_2}{FS_1}
    \overline{\Delta}.
    \end{equation*}
\end{remark}

\begin{remark}
\label{remark_signrestriction}
    The bounds in Equations \eqref{specificlowerbound2} and \eqref{specificupperbound2} simplify under sign restrictions for the treatment effects $\Delta^0_2(g)$. For example, if we assume causal effects are nonnegative ($\underline{\Delta}=0$), then $RF_2/FS_1$ would be the lower bound or upper bound (depending on whether $FS_2$ is lower than $FS_1$). In particular, if $FS_2\le FS_1$, $RF_2/FS_1$ is the lower bound.
\end{remark}

The bounds in Equations \eqref{generallowerbound2} and \eqref{generalupperbound2} are valid without any assumption other than irreversible treatment (Assumption \ref{irreverse}) and the basic conditions for IV validity (Assumption \ref{basicvalidity}). When treatment effects for the groups that contaminate $RF_2$ are homogeneous given period and treatment length, the tighter bounds in Equations \eqref{specificlowerbound2} and \eqref{specificupperbound2} are valid. For the bounds in Equations \eqref{generallowerbound2} and \eqref{generalupperbound2}, the smaller the probability of late switching into treatment, the tighter the bounds. For the bounds in Equations \eqref{specificlowerbound2} and \eqref{specificupperbound2}, the smaller the change in the first stage, the tighter the bounds. Appendix \ref{partialid_generalboundste} provides bounds without assuming a nonpositive lower bound and a nonengative upper bound for treatment effects.

\section{$T$-periods setting} \label{Sec:T periods}

The results from Section \ref{Sec:2 periods} generalize for settings with an arbitrary number of periods. Consider a setting with $T$ periods of time and let $\T\coloneqq\{1,...,T\}$. The definitions of $RF_t$, $FS_t$, and latent groups extend naturally for this setting with $T$ periods. Assumption \ref{irreverse} becomes: 

\begin{assumption}[Irreversible Treatment]
\label{irreverse_T}
For all $t\in\T\setminus\{T\}$, $D_{i,t}=1\implies D_{i,t+1}=1$.
\end{assumption}

Given irreversible treatment, denote potential outcomes by $Y_{i,t}(0,z)$, and $Y_{i,t}(1,\tau,z)$ depending on whether the observation has never been treated, or on whether it has been first treated at period $t-\tau$. We consider an extension of Assumption \ref{basicvalidity} for settings with $T$ periods. Once more, note that it only requires relevance and monotonicity in the first period. 

\begin{assumption}
\label{basicvalidity_T}
The following hold:
\begin{enumerate}
    \item \textit{Exclusion}: For each $t\in\T$ and $z\in\{0,1\}$, $Y_{i,t}(0, z)=Y_{i,t}(0)$ and $Y_{i,t}(1,\tau, z)=Y_{i,t}(1,\tau)$ for all $\tau\in\{0,...,t-1\}$.
    \item \textit{Independence}: $\big(Y_{i,t}(0), Y_{i,t}(1,0),...,Y_{i,t}(1,t-1), D_{i,1}(1), D_{i,1}(0),...,D_{i,t}(1), D_{i,t}(0)\big)$ is independent of $Z_i$ for all $t\in\T$.
    \item \textit{Relevance at $t=1$}: $FS_1\neq0$.
    \item \textit{Monotonicity at $t=1$}: $\mathbb{P}(F_1)=0$.
\end{enumerate}
\end{assumption}

In this case, we are interested in estimating the treatment effects $\Delta_t^{t-1}(C_1)$, which represent the local average treatment effects at time $t$ of being treated $t-1$ periods before (that is, when treatment started at $t=1$), for the first-period compliers. As before, the per-period IV estimand identifies  $\Delta_t^{t-1}(C_1)$ under Assumption \ref{basicvalidity_T} if there is static compliance. However, this would not be the case when compliance is dynamic.

\subsection{Decomposition of RF and IV estimands with $T$ periods}

To generalize Proposition \ref{decomposition2} for settings with $T$ periods, write $C_{t:t'}$ for observations that are compliers from $t$ to $t'$, with analogous notation for defiers and never-takers. We only keep track of the first period in which observations are always-takers because always-takers in a given period are always-takers in all following periods. Moreover, define the following sets:
\begin{equation*}
\begin{split}
&\G^+_2\coloneqq\big\{(NT_1, C_2)\big\},\\
&\G^-_2\coloneqq\big\{(C_1, AT_2), (NT_1, F_2)\big\},
\end{split}
\end{equation*}
and, for each $t\in\T\setminus\{1,2\}$,
\begin{equation*}
\begin{split}
    &\G^+_t\coloneqq
    \big\{ 
    (NT_{1:t-1}, C_t), 
    (NT_{1:\ell-1}, F_{\ell:t-1}, AT_{t}): \ell=2,...,t-1
    \big\},\\
    &\G^-_t\coloneqq
    \big\{
    (C_{1:t-1}, AT_t), 
    (NT_{1:t-1}, F_t),
    (NT_{1:\ell-1}, C_{\ell:t-1}, AT_{t}): \ell=2,...,t-1
    \big\}.
\end{split}
\end{equation*}

Assumption \ref{irreverse_T} implies that, for each $t\in\T\setminus\{1\}$, the latent groups in $\G^+_t$ are the ones that switch into treatment at $t$ when $Z_i=1$ and the latent groups in $\G^-_t$ are the ones that switch into treatment at $t$ when $Z_i=0$. The following proposition generalizes the decomposition of per-period reduced forms and first stages.

\begin{proposition}
\label{rffsestimands}
Under Assumptions \ref{irreverse_T} and \ref{basicvalidity_T}, for each $t\in\T\setminus\{1\}$,
\begin{equation}
\label{lambdat}
    RF_t
    =
    \Pb\left(C_1\right)
    \Delta_t^{t-1}(C_1)
    -
    \sum_{k=2}^{t}\sum_{g\in\G^-_k}
    \Pb\left(g\right)
    \Delta_t^{t-k}\left(g\right)
    +
    \sum_{k=2}^{t}\sum_{g\in\G^+_k}
    \Pb\left(g\right)
    \Delta_t^{t-k}\left(g\right)
\end{equation}
and 
\begin{equation}
\label{gammat}
FS_t=
    \Pb\left(C_{1}\right)
    -
    \sum_{k=2}^{t}\sum_{g\in\G^-_k}
    \Pb\left(g\right)
    +
    \sum_{k=2}^{t}\sum_{g\in\G^+_k}
    \Pb\left(g\right).
\end{equation}
\end{proposition}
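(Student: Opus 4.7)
The plan is to parametrize each observation by its first-treatment times under the two instrument values, $\tau_z := \min\{t \in \T : D_{i,t}(z)=1\}$, with the convention $\tau_z = T+1$ if the observation is never treated under $Z_i=z$. Assumption \ref{irreverse_T} gives the clean representation $D_{i,t}(z) = \mathbf{1}(\tau_z \leq t)$, and combining exclusion with independence in Assumption \ref{basicvalidity_T} means that the outcome observed at time $t$ when $Z_i=z$ equals $Y_{i,t}(0)\mathbf{1}(\tau_z>t) + Y_{i,t}(1,t-\tau_z)\mathbf{1}(\tau_z\leq t)$, with $(\tau_1,\tau_0)$ and all relevant potential outcomes jointly independent of $Z_i$.

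First I would handle $FS_t$. Writing $\mathbf{1}(\tau_z\leq t) = \sum_{k=1}^t \mathbf{1}(\tau_z=k)$ and taking differences,
\begin{equation*}
    FS_t = \sum_{k=1}^t \bigl[\Pb(\tau_1=k,\tau_0\neq k) - \Pb(\tau_0=k,\tau_1\neq k)\bigr],
\end{equation*}
since the contributions from $\{\tau_1=\tau_0=k\}$ cancel. Monotonicity at $t=1$ eliminates $\{\tau_0=1,\tau_1>1\}$, so the $k=1$ term reduces to $\Pb(C_1)$. For $k\geq 2$, the bookkeeping step is to show that the events $\{\tau_1=k,\tau_0\neq k\}$ and $\{\tau_0=k,\tau_1\neq k\}$ decompose exactly into the groups listed in $\G^+_k$ and $\G^-_k$, respectively: $\{\tau_1=k,\tau_0>k\}$ forces the path $(NT_{1:k-1},C_k)$, and $\{\tau_1=k,\tau_0=\ell\}$ with $2\leq\ell<k$ forces $(NT_{1:\ell-1},F_{\ell:k-1},AT_k)$ once irreversibility is applied under each instrument separately; symmetrically for $\G^-_k$, where $\tau_0=k,\tau_1=1$ gives $(C_{1:k-1},AT_k)$ and $\tau_0=k,\tau_1=\ell$ with $2\leq\ell<k$ gives $(NT_{1:\ell-1},C_{\ell:k-1},AT_k)$. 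Substituting yields \eqref{gammat}.

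For $RF_t$, the pointwise identity
\begin{equation*}
    \sum_{k=1}^t \bigl[Y_{i,t}(1,t-k)-Y_{i,t}(0)\bigr]\bigl[\mathbf{1}(\tau_1=k) - \mathbf{1}(\tau_0=k)\bigr]
\end{equation*}
equals the difference of the two observed-outcome expressions above, by adding and subtracting $Y_{i,t}(0)$ in each $z$-branch and splitting the indicator period by period. Taking expectations, conditioning each $k$-term on its latent group to pull out $\Delta_t^{t-k}(g)$, and reusing the enumeration of $\G^+_k,\G^-_k$ from the $FS_t$ step delivers \eqref{lambdat}, with the $k=1$ term contributing $\Pb(C_1)\Delta_t^{t-1}(C_1)$. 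The main obstacle is precisely this combinatorial enumeration for $k\geq 2$: verifying that the listed groups partition $\{\tau_1=k,\tau_0\neq k\}$ and $\{\tau_0=k,\tau_1\neq k\}$. The subtle case is $\tau_1 > \tau_0 \geq 2$ (and its mirror), where irreversibility under each instrument value separately pins down the intermediate defier (or complier-to-always-taker) window.
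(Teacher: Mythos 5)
Your argument is correct, and it reaches the result by a genuinely different route than the paper. The paper's proof conditions on the ``root'' groups ($C_1$, $(NT_{1:k-1},C_k)$, $(NT_{1:k-1},F_k)$), unfolds each group's subsequent switching under the opposite instrument value recursively via iterated expectations, and then rearranges the resulting double sums so that the contaminating groups line up with $\G^-_k$ and $\G^+_k$. You instead parametrize each unit by its pair of first-treatment times $(\tau_1,\tau_0)$ under the two instrument arms, use irreversibility to write $D_{i,t}(z)=\mathbf{1}(\tau_z\le t)$ and the observed outcome as a switching expression in $\tau_z$, and reduce everything to the pointwise identity plus the combinatorial claim that, for $k\ge 2$, the events $\{\tau_1=k,\tau_0\neq k\}$ and $\{\tau_0=k,\tau_1\neq k\}$ are partitioned exactly by $\G^+_k$ and $\G^-_k$ (with the $\tau=1$ cases on the wrong arm killed by first-period monotonicity, as you note for $k=1$). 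That partition claim is true --- the latent-group histories up to period $k$ are in bijection with the pairs $(\tau_1,\tau_0)$ --- and your verification covers it, though in the $\G^-_k$ enumeration you list only $\tau_1=1$ and $2\le\tau_1<k$ explicitly and leave $\{\tau_0=k,\tau_1>k\}\mapsto(NT_{1:k-1},F_k)$ to the word ``symmetrically''; spell that case out in a full write-up. What each approach buys: yours is shorter, makes the sign of each weight transparent (it is just the sign of $\mathbf{1}(\tau_1=k)-\mathbf{1}(\tau_0=k)$), and avoids the nested recursion and index-swapping of the paper's Equation \eqref{reducedform_extensive}; the paper's forward-recursive conditioning, in turn, mirrors the recursive bias-correction logic later used for point identification (Proposition \ref{identification}) and makes the ``contamination of $C_1$'' narrative explicit.
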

\begin{proof}
    See Appendix \ref{proofrffsestimands}.
\end{proof} 

\begin{corollary}
\label{IVestimand}
Under Assumptions \ref{irreverse_T} and \ref{basicvalidity_T}, for any $t\in\T\setminus\{1\}$ such that $FS_t\neq0$, $RF_t/FS_t$ is a linear combination of the causal effects in Equation \eqref{lambdat} in which the weights sum to one but some of them may be negative. A sufficient condition for the existence of negative weights at $t$ is the existence of $k\in\{2,...,t\}$ such that $FS_k<FS_{k-1}$. Moreover, the causal effects that are negatively weighted in $RF_t/FS_t$ are the same as in $RF_t$ if, and only if, $FS_t>0$. 
\end{corollary}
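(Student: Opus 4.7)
The proof proceeds by dividing the expression for $RF_t$ in Equation \eqref{lambdat} by $FS_t$. This directly yields
\begin{equation*}
\frac{RF_t}{FS_t} = \frac{\Pb(C_1)}{FS_t}\Delta_t^{t-1}(C_1) - \sum_{k=2}^t \sum_{g \in \G^-_k} \frac{\Pb(g)}{FS_t}\Delta_t^{t-k}(g) + \sum_{k=2}^t \sum_{g \in \G^+_k} \frac{\Pb(g)}{FS_t}\Delta_t^{t-k}(g),
\end{equation*}
which is a linear combination of the causal effects appearing in Equation \eqref{lambdat}. Factoring $1/FS_t$ out of the sum of these coefficients and invoking Equation \eqref{gammat} shows the sum equals $FS_t/FS_t = 1$.

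For the sufficient condition, the plan is to telescope Equation \eqref{gammat} to obtain $FS_k - FS_{k-1} = -\sum_{g \in \G^-_k}\Pb(g) + \sum_{g \in \G^+_k}\Pb(g)$ for each $k\in\{2,\ldots,t\}$. If $FS_k<FS_{k-1}$ for some such $k$, then $\sum_{g\in\G^-_k}\Pb(g) > \sum_{g\in\G^+_k}\Pb(g) \geq 0$, so there exists $g^\ast\in\G^-_k$ with $\Pb(g^\ast)>0$. When $FS_t>0$, the weight $-\Pb(g^\ast)/FS_t$ on $\Delta_t^{t-k}(g^\ast)$ is strictly negative. When $FS_t<0$, I would use that $\Pb(C_1)=FS_1>0$ (by relevance and monotonicity at $t=1$ in Assumption \ref{basicvalidity_T}) to conclude that the weight $\Pb(C_1)/FS_t$ on $\Delta_t^{t-1}(C_1)$ is itself strictly negative. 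Either way, $RF_t/FS_t$ admits a negative weight.

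For the final "if and only if" claim, the key observation is that multiplication by $1/FS_t$ preserves the sign of every nonzero coefficient when $FS_t>0$ and reverses it when $FS_t<0$. Hence if $FS_t>0$ the causal effects with strictly negative weight in $RF_t/FS_t$ coincide with those in $RF_t$. Conversely, when $FS_t<0$, the coefficient of $\Delta_t^{t-1}(C_1)$ equals $\Pb(C_1)>0$ in $RF_t$ but $\Pb(C_1)/FS_t<0$ in $RF_t/FS_t$, so $\Delta_t^{t-1}(C_1)$ is negatively weighted only in the latter, and the two sets differ.

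The main obstacle is just the bookkeeping behind the telescoping identity, namely verifying that subtracting the partial sum up to $t-1$ from the partial sum up to $t$ in Equation \eqref{gammat} leaves only the $\G^-_t$ and $\G^+_t$ contributions. Since those sets are defined precisely as the latent types switching into treatment at period $k$ on each side of $Z_i$, this is essentially immediate from Proposition \ref{rffsestimands}, and the rest reduces to tracking signs.
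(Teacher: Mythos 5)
Your proposal is correct and follows essentially the same route as the paper's proof: divide Equation \eqref{lambdat} by $FS_t$, use Equation \eqref{gammat} to show the weights sum to one, deduce from $FS_k-FS_{k-1}=-\sum_{g\in\G^-_k}\Pb(g)+\sum_{g\in\G^+_k}\Pb(g)$ that $FS_k<FS_{k-1}$ forces some $g^\ast\in\G^-_k$ with $\Pb(g^\ast)>0$, and track signs for the final equivalence. Your explicit treatment of the $FS_t<0$ case (using $\Pb(C_1)=FS_1>0$ so the weight on $\Delta_t^{t-1}(C_1)$ turns negative) is if anything slightly more careful than the paper's wording, but it is the same argument.
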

\begin{proof}
    See Appendix \ref{proofcorollaryIVestimand}.
\end{proof}

\subsection{Point identification with $T$ periods}
\label{sec:T periods point id}

For each $t\in\T\setminus\{1\}$, define
\begin{equation*}
    \G_t\coloneqq\G^+_t\cup\G^-_t,
\end{equation*}
the set of latent groups that switch into treatment at $t$ and contaminate the reduced form. The following assumption generalizes Assumption \ref{Assumption_id}.

\begin{assumption}
\label{Assumption_id_T}
For all $t\in\T$ and $\tau\in\{0,...,t-1\}$, $\Delta^\tau_t(C_1)=\Delta^\tau(C_1)$. Moreover, for each $t\in\T\setminus\{1\}$ and $\tau\in\{0,...,t-2\}$, for any latent group $g\in\G_{t-\tau}$ such that $\Pb(g)>0$, $\Delta^\tau(C_1)=\Delta^\tau_t(g)$. 
\end{assumption}

Proposition \ref{identification} below formalizes the identification result. To state it, consider matrix notation. Let $\mathbf{RF}\coloneqq\left(RF_1,...,RF_T\right)'$. For each $t\in\T\setminus\{1\}$, define $\rho_t\coloneqq\Pb\left(D_{i,t}>D_{i,t-1}\middle|Z_i=0\right)\\ -\Pb\left(D_{i,t}>D_{i,t-1}\middle|Z_i=1\right)$, the difference between the probability of switching into treatment for $Z_i=0$ and $Z_i=1$ observations, which equals $FS_{t-1}-FS_t$ due to the irreversibility of treatment (Assumption \ref{irreverse_T}). Moreover, let
\begin{equation*}
    \mathbf{P}\coloneqq
    \begin{bmatrix}
     FS_1 & 0 & \hdots & 0\\
    -\rho_2 & FS_1 & \hdots & 0\\
    \vdots & \vdots & \ddots & \vdots \\
    -\rho_{T} & -\rho_{T-1} & \hdots & FS_1
    \end{bmatrix},
\end{equation*}
which is a lower triangular $T\times T$ matrix. Note that $\mathbf{P}$ is invertible provided that the instrument is relevant in the first period.

\begin{proposition}
\label{identification}
Suppose Assumptions \ref{irreverse_T} and \ref{basicvalidity_T} hold. Under Assumption \ref{Assumption_id_T},
\begin{equation}
\label{solvedelta}
\mathbf{\Delta}=\mathbf{P}^{-1}\mathbf{RF},
\end{equation}
where $\mathbf{\Delta}\coloneqq\left(\Delta^0(C_1),...,\Delta^{T-1}(C_1)\right)'$.
\end{proposition}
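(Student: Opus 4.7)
The plan is to substitute Assumption \ref{Assumption_id_T} into the per-period decomposition of $RF_t$ from Proposition \ref{rffsestimands} and to read off a triangular linear system for the vector $\mathbf{\Delta}$. Fixing $t\in\T$, the first clause of Assumption \ref{Assumption_id_T} lets me replace $\Delta_t^{t-1}(C_1)$ by $\Delta^{t-1}(C_1)$. For each $k\in\{2,\ldots,t\}$ and each contaminating group $g\in\G_k$, the time-$t$ treatment length of $g$ is $\tau=t-k\in\{0,\ldots,t-2\}$, and $g$ belongs to $\G_{t-\tau}=\G_k$, so the second clause of Assumption \ref{Assumption_id_T} gives $\Delta_t^{t-k}(g)=\Delta^{t-k}(C_1)$. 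After these substitutions, every treatment effect appearing in Equation \eqref{lambdat} collapses to some $\Delta^{t-k}(C_1)$ that can be factored outside of the inner sum over $g$.

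Next I would collapse the probability coefficients. What remains in each bracket is $\sum_{g\in\G_k^+}\Pb(g)-\sum_{g\in\G_k^-}\Pb(g)$, which by differencing the first-stage decomposition in Equation \eqref{gammat} equals $FS_k-FS_{k-1}=-\rho_k$. Combined with $\Pb(C_1)=FS_1$ (a consequence of first-period monotonicity in Assumption \ref{basicvalidity_T}), this produces
\begin{equation*}
RF_t = FS_1\cdot\Delta^{t-1}(C_1) - \sum_{k=2}^{t}\rho_k\,\Delta^{t-k}(C_1) \qquad \text{for each } t\in\T,
\end{equation*}
which is precisely the $t$-th row of the matrix equation $\mathbf{RF}=\mathbf{P}\mathbf{\Delta}$.

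Finally, since $\mathbf{P}$ is lower triangular with every diagonal entry equal to $FS_1\neq0$, it is invertible and $\mathbf{\Delta}=\mathbf{P}^{-1}\mathbf{RF}$. I expect the main obstacle to be the indexing bookkeeping in the first step: verifying that as $(k,g)$ ranges over the terms generated by Proposition \ref{rffsestimands}, the corresponding pair $(\tau,g)=(t-k,g)$ falls inside the range covered by Assumption \ref{Assumption_id_T}, so every contaminating effect is pinned down. Once that alignment is confirmed, the remainder is a mechanical rewriting and an inversion of a triangular matrix.
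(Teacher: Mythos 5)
Your proposal is correct and follows essentially the same route as the paper: substitute Assumption \ref{Assumption_id_T} into the decomposition of Proposition \ref{rffsestimands}, collapse the group probabilities into $\rho_k=FS_{k-1}-FS_k$ via Equation \eqref{gammat}, and invert the lower-triangular matrix $\mathbf{P}$ with diagonal $FS_1\neq0$. The only cosmetic caveat is that the $t=1$ row is not covered by Proposition \ref{rffsestimands} (which applies to $t\in\T\setminus\{1\}$) and instead comes from the standard first-period LATE identity $RF_1=\Pb(C_1)\Delta_1^0(C_1)=FS_1\Delta^0(C_1)$, exactly as the paper's proof also invokes it.
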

\begin{proof}
    See Appendix \ref{proofidentification}.
\end{proof}

\subsection{Partial identification with $T$ periods}

 In the general $T$-periods setting, dynamic LATEs are partially identified in every period for which the treatment effects are bounded (which, again, nests settings with bounded outcomes). Proposition \ref{tebounds_T} generalizes Proposition \ref{tebounds2}. Appendix \ref{partialid_generalboundste} provides general bounds without requiring the lower bound (upper bound) for the treatment effects to be nonpositive (nonnegative).

\begin{proposition}
\label{tebounds_T}
Suppose Assumptions \ref{irreverse_T} and \ref{basicvalidity_T} hold. If, for $t\in\T\setminus\{1\}$, there exist $\underline{\Delta}_t, \overline{\Delta}_t\in\mathbb{R}$, with $\underline{\Delta}_t\le0\le\overline{\Delta}_t$, such that, for each $\tau\in\{0,...,t-2\}$, if $g\in\G_{t-\tau}$ and $\Pb(g)>0$, $\underline{\Delta}_t\le\Delta^\tau_t(g)\le\overline{\Delta}_t$, then a lower bound for $\Delta^{t-1}_t(C_1)$ is given by
\begin{equation}
\label{generallowerbound_T}
\begin{split}
    \frac{RF_t}{FS_1}
    +
    \Pb\left(D_{i,t}>D_{i,1}\middle|Z_i=0\right)
    \frac{\underline{\Delta}_t}{FS_1}
    -
    \Pb\left(D_{i,t}>D_{i,1}\middle|Z_i=1\right)
    \frac{\overline{\Delta}_t}{FS_1}
\end{split}
\end{equation}
and an upper bound is given by
\begin{equation}
\label{generalupperbound_T}
\begin{split}
    \frac{RF_t}{FS_1}
    +
    \Pb\left(D_{i,t}>D_{i,1}\middle|Z_i=0\right)
    \frac{\overline{\Delta}_t}{FS_1}
    -
    \Pb\left(D_{i,t}>D_{i,1}\middle|Z_i=1\right)
    \frac{\underline{\Delta}_t}{FS_1}.
\end{split}
\end{equation}
If, in addition to the conditions above, for each $\tau\in\{0,...,t-2\}$, for all $g,g'\in\G_{t-\tau}$ with $\Pb(g)>0$ and $\Pb(g')>0$, $\Delta^\tau_t(g)=\Delta_t^\tau(g')$, then
\begin{equation}
\label{specificlowerbound_T}
    \frac{RF_t}{FS_1} 
    +
    \underline{\Delta}_t
    \frac{\left(FS_1-FS_t\right)}{FS_1}
    +
    \left(\overline{\Delta}_t-\underline{\Delta}_t\right)
    \sum_{k=2}^t
    \mathbf{1}\left(FS_{k-1}< FS_k\right)
    \frac{FS_{k-1}-FS_k}{FS_1}
\end{equation}
is a lower bound for $\Delta_t^{t-1}(C_1)$ and
\begin{equation}
\label{specificupperbound_T}
    \frac{RF_t}{FS_1} 
    +
    \overline{\Delta}_t
    \frac{\left(FS_1-FS_t\right)}{FS_1}
    +
    \left(\underline{\Delta}_t-\overline{\Delta}_t\right)
    \sum_{k=2}^t
    \mathbf{1}\left(FS_{k-1}< FS_k\right)
    \frac{FS_{k-1}-FS_k}{FS_1}
\end{equation}
is an upper bound for $\Delta_t^{t-1}(C_1)$. These bounds are (weakly) tighter than the previous ones. 
\end{proposition}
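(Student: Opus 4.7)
The plan is to start from the identity provided by Proposition \ref{rffsestimands}, which, after solving for the target parameter and dividing by $FS_1=\Pb(C_1)$, gives
\begin{equation*}
\Delta_t^{t-1}(C_1)
=\frac{RF_t}{FS_1}
+\frac{1}{FS_1}\sum_{k=2}^{t}\sum_{g\in\G^-_k}\Pb(g)\Delta_t^{t-k}(g)
-\frac{1}{FS_1}\sum_{k=2}^{t}\sum_{g\in\G^+_k}\Pb(g)\Delta_t^{t-k}(g).
\end{equation*}
All bounds will follow by minimizing (resp.\ maximizing) the last two sums over the feasible values of the contaminating treatment effects.

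For the general bounds \eqref{generallowerbound_T}--\eqref{generalupperbound_T}, I would apply the hypothesis $\underline{\Delta}_t\le\Delta_t^{t-k}(g)\le\overline{\Delta}_t$ termwise: lower-bound each $\Delta_t^{t-k}(g)$ with $g\in\G^-_k$ by $\underline{\Delta}_t$ and upper-bound each $\Delta_t^{t-k}(g)$ with $g\in\G^+_k$ by $\overline{\Delta}_t$ (and reverse for the upper bound). The sums then collapse after observing that $\G^-_k$ (resp.\ $\G^+_k$) consists precisely of the latent types for which $D_{i,k}(0)>D_{i,k-1}(0)$ (resp.\ $D_{i,k}(1)>D_{i,k-1}(1)$), so by telescoping in $k$ and the independence part of Assumption \ref{basicvalidity_T},
\begin{equation*}
\sum_{k=2}^{t}\sum_{g\in\G^-_k}\Pb(g)=\Pb\!\left(D_{i,t}>D_{i,1}\mid Z_i=0\right),
\qquad
\sum_{k=2}^{t}\sum_{g\in\G^+_k}\Pb(g)=\Pb\!\left(D_{i,t}>D_{i,1}\mid Z_i=1\right).
\end{equation*}

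For the tighter bounds \eqref{specificlowerbound_T}--\eqref{specificupperbound_T}, under the additional within-$\G_{t-\tau}$ homogeneity I would let $\Delta^{t-k}_t$ denote the common value of $\Delta_t^{t-k}(g)$ on $\G_k$. The two inner sums then combine into $\sum_{k=2}^t\Delta^{t-k}_t\,\rho_k/FS_1$ where, using the telescoping argument and irreversibility, $\rho_k=\sum_{g\in\G^-_k}\Pb(g)-\sum_{g\in\G^+_k}\Pb(g)=FS_{k-1}-FS_k$. Minimizing each term over $\Delta^{t-k}_t\in[\underline{\Delta}_t,\overline{\Delta}_t]$ gives $\underline{\Delta}_t\rho_k$ when $\rho_k\ge 0$ and $\overline{\Delta}_t\rho_k$ when $\rho_k<0$; writing this as $\underline{\Delta}_t\rho_k+(\overline{\Delta}_t-\underline{\Delta}_t)\rho_k\mathbf{1}(\rho_k<0)$ and summing with $\sum_{k=2}^t\rho_k=FS_1-FS_t$ reproduces \eqref{specificlowerbound_T}, and symmetric manipulations yield \eqref{specificupperbound_T}.

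Finally, to show the tighter bounds are weakly tighter, I would subtract the general lower bound from the tighter one; using $FS_1-FS_t=\sum_k(A^-_k-A^+_k)$ with $A^{\pm}_k\coloneqq\sum_{g\in\G^{\pm}_k}\Pb(g)$, the difference simplifies to $\frac{\overline{\Delta}_t-\underline{\Delta}_t}{FS_1}\bigl[\sum_k A^+_k\mathbf{1}(\rho_k\ge0)+\sum_k A^-_k\mathbf{1}(\rho_k<0)\bigr]\ge 0$, with the analogous identity for the upper bounds. The main obstacle I anticipate is not the bounding step itself but the bookkeeping needed to verify that the collection $\G^-_k$ (resp.\ $\G^+_k$) really exhausts all latent histories with $D_{i,k}(0)>D_{i,k-1}(0)$ (resp.\ $D_{i,k}(1)>D_{i,k-1}(1)$), and to then match the minimization output to the compact indicator form in \eqref{specificlowerbound_T}--\eqref{specificupperbound_T}.
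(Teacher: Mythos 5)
Your overall strategy is the paper's own (start from the decomposition in Proposition \ref{rffsestimands}, solve for $\Pb(C_1)\Delta_t^{t-1}(C_1)$, bound the contamination terms, then collapse the probability sums), but the step where you collapse the sums contains a genuine error. The claimed equalities
$\sum_{k=2}^{t}\sum_{g\in\G^-_k}\Pb(g)=\Pb\left(D_{i,t}>D_{i,1}\middle|Z_i=0\right)$ and
$\sum_{k=2}^{t}\sum_{g\in\G^+_k}\Pb(g)=\Pb\left(D_{i,t}>D_{i,1}\middle|Z_i=1\right)$
are false in general: the groups $(NT_{1:k-1},AT_k)$, $k\in\{2,\dots,t\}$, switch into treatment at $k$ under \emph{both} instrument assignments, so they contribute to both switching probabilities, yet they belong to neither $\G^+_k$ nor $\G^-_k$ precisely because their terms cancel out of $RF_t$. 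Hence your bookkeeping claim that $\G^-_k$ (resp.\ $\G^+_k$) ``consists precisely of the latent types with $D_{i,k}(0)>D_{i,k-1}(0)$ (resp.\ $D_{i,k}(1)>D_{i,k-1}(1)$)'' is wrong; the correct relations are only one-sided, e.g.\ $\sum_{k=2}^{t}\sum_{g\in\G^-_k}\Pb(g)=\Pb\left(D_{i,t}>D_{i,1}\middle|Z_i=0\right)-\sum_{k=2}^{t}\Pb(NT_{1:k-1},AT_k)\le\Pb\left(D_{i,t}>D_{i,1}\middle|Z_i=0\right)$, and analogously for $\G^+_k$.

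The bounds \eqref{generallowerbound_T}--\eqref{generalupperbound_T} are nonetheless correct, but only because of the sign restriction $\underline{\Delta}_t\le0\le\overline{\Delta}_t$, which your argument never invokes: since $\underline{\Delta}_t\le0$, replacing the (smaller) sum over $\G^-_k$ by the larger switching probability under $Z_i=0$ can only decrease the lower bound, and since $\overline{\Delta}_t\ge0$ the same holds for the $\G^+_k$ sum; this is exactly how the paper argues, and it is why the version without sign restrictions (Proposition \ref{tebounds_T_general}) requires the indicator/$\max\{FS_1-FS_t,0\}$ form rather than the plain switching probabilities. You should therefore replace the equality claim by these inequalities and explicitly use $\underline{\Delta}_t\le0\le\overline{\Delta}_t$. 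The same slip propagates into your tightness comparison: with the correct switching probabilities the difference between the tighter and general lower bounds carries an extra term $\left(\overline{\Delta}_t-\underline{\Delta}_t\right)\sum_{k=2}^{t}\Pb(NT_{1:k-1},AT_k)/FS_1\ge0$ on top of the bracket you wrote, so the conclusion survives (your remaining identity $\sum_k A^+_k\mathbf{1}(\rho_k\ge0)+\sum_k A^-_k\mathbf{1}(\rho_k<0)\ge0$ matches the paper's condition in the proof of Proposition \ref{tebounds_T}), but the identity you state for the difference is not correct as written. The argument for the tighter bounds \eqref{specificlowerbound_T}--\eqref{specificupperbound_T} themselves is fine and coincides with the paper's.
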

\begin{proof}
    See Appendix \ref{partialIDproof}.
\end{proof}

\begin{remark}
\label{remarkpartialid}
    The points in Remarks \ref{remark_nonexistentgroups} and \ref{remark_signrestriction} generalize. Assuming that $\Pb(NT_{1:k-1}, C_k)=\Pb(NT_{1:k-1}, F_k)=0$ for all $k\in\{2,...,t\}$ implies that the conditions in Proposition \ref{tebounds_T} for tighter bounds hold at $t$ and that first stages are nonincreasing (up to $t$). Under a sign restriction for treatment effects, if first stages are monotonic and the condition for tighter bounds holds, then $RF_t/FS_1$ is one of the bounds (whether it is the lower or upper bound depends on first stages being decreasing or increasing).
\end{remark}

\section{Conclusion}
\label{sec:conclusion}
We consider the identification of dynamic causal effects of an irreversible binary treatment when the only source of exogenous variation is a time-invariant binary instrument. Under a dynamic extension of standard IV assumptions, we decompose the per-period IV estimands as a weighted sum of causal effects for different latent groups and treatment exposures. Even though the weights given for causal effects sum to one, some may be negative, which greatly restricts even a weakly causal interpretation (in \textcite{blandholetal2022}'s sense) of per-period IV estimands. In particular,  per-period IV estimands may be negative even when all treatment effects are positive. A sufficient condition for the existence of negative weights is that the first stage decreases with time.

Dynamic LATEs are shown to be identified by the per-period IV estimands under strong assumptions, including causal effects not depending on the time since treatment. We consider an alternative set of assumptions allowing unrestricted heterogeneity in the time-since-treatment dimension but requiring homogeneity in the calendar-time dimension. Under this alternative assumption, dynamic LATEs are identified recursively by correcting each period's bias using previously identified effects. In an extension of \textcite{ischemia}, this identifies exposure effects allowing for defiance after the first period. This flexibility is useful in settings where, for example, those lottery assigned to treatment that did not get treated in the first period face restrictions in later periods.

For settings in which both homogeneity assumptions may be too restrictive, we show how dynamic LATEs can be partially identified without any homogeneity conditions on the causal effect. We also show how to tighten these bounds by imposing cross-group homogeneity assumptions while allowing for unrestricted heterogeneity across both calendar time and exposure dimensions.


\clearpage
\printbibliography

@ARTICLE{AADKP2011,
title = {Accountability and Flexibility in Public Schools: Evidence from Boston's Charters And Pilots},
author = {Atila Abdulkadiroğlu and Angrist, Joshua and Dynarski, Susan and Kane, Thomas J. and Pathak, Parag},
year = {2011},
journal = {The Quarterly Journal of Economics},
volume = {126},
number = {2},
pages = {699-748},
abstract = {We use student assignment lotteries to estimate the effect of charter school attendance on student achievement in Boston. We also evaluate a related alternative, Boston's pilot schools. Pilot schools have some of the independence of charter schools but are in the Boston Public School district and are covered by some collective bargaining provisions. Lottery estimates show large and significant score gains for charter students in middle and high school. In contrast, lottery estimates for pilot school students are mostly small and insignificant, with some significant negative effects. Charter schools with binding assignment lotteries appear to generate larger gains than other charters. Copyright 2011, Oxford University Press.},
url = {https://EconPapers.repec.org/RePEc:oup:qjecon:v:126:y:2011:i:2:p:699-748}
}

@TECHREPORT{GCD2011,
title = {The Evaluation of Charter School Impacts},
author = {Gleason, Philip and Clark, Melissa and Tuttle, Christina Clark and Dwoyer, Emily},
year={2011},
institution = {Mathematica Policy Research},
type = {Mathematica Policy Research Reports},
abstract = {This report provides findings from the largest charter school study of its kind to date to use an experimental design.},
keywords = {Charter Schools},
url = {https://EconPapers.repec.org/RePEc:mpr:mprres:3066da11915a4b04a77b388482090fe5}
}

@article{DF2011,
Author = {Dobbie, Will and Fryer, Roland G., Jr.},
Title = {Are High-Quality Schools Enough to Increase Achievement among the Poor? Evidence from the Harlem Children's Zone},
Journal = {American Economic Journal: Applied Economics},
Volume = {3},
Number = {3},
Year = {2011},
Pages = {158-87},
DOI = {10.1257/app.3.3.158},
URL = {https://www.aeaweb.org/articles?id=10.1257/app.3.3.158}}

@article{ACDPW2016,
author = {Angrist, Joshua D. and Cohodes, Sarah R. and Dynarski, Susan M. and Pathak, Parag A. and Walters, Christopher R.},
title = {Stand and Deliver: Effects of Boston’s Charter High Schools on College Preparation, Entry, and Choice},
journal = {Journal of Labor Economics},
volume = {34},
number = {2},
pages = {275-318},
year = {2016},
doi = {10.1086/683665}
}

@article{AAHP2016,
Author = {Abdulkadiroğlu, Atila and Angrist, Joshua D. and Hull, Peter D. and Pathak, Parag A.},
Title = {Charters without Lotteries: Testing Takeovers in New Orleans and Boston},
Journal = {American Economic Review},
Volume = {106},
Number = {7},
Year = {2016},
Pages = {1878-1920},
DOI = {10.1257/aer.20150479},
URL = {https://www.aeaweb.org/articles?id=10.1257/aer.20150479}}

@article{ai94,
 author = {Guido W. Imbens and Joshua D. Angrist},
 journal = {Econometrica},
 number = {2},
 pages = {467--475},
 title = {Identification and Estimation of Local Average Treatment Effects},
 volume = {62},
 year = {1994}
}

@article{ai95,
author = { Joshua D.   Angrist  and  Guido W.   Imbens },
title = {Two-Stage Least Squares Estimation of Average Causal Effects in Models with Variable Treatment Intensity},
journal = {Journal of the American Statistical Association},
volume = {90},
number = {430},
pages = {431-442},
year  = {1995},
publisher = {Taylor & Francis}}

@article{air96,
 author = {Joshua D. Angrist and Guido W. Imbens and Donald B. Rubin},
 journal = {Journal of the American Statistical Association},
 number = {434},
 pages = {444--455},
 title = {Identification of Causal Effects Using Instrumental Variables},
 volume = {91},
 year = {1996}}

@article{miquel2002iv,
  title={Identification of Dynamic Treatment Effects by Instrumental Variables},
  author={Miquel, Ruth},
  year={2002},
  journal = {Mimeo}}

@article{cellinirdd,
author = {Cellini, Stephanie Riegg and Ferreira, Fernando and Rothstein, Jesse},
title = "{The Value of School Facility Investments: Evidence from a Dynamic Regression Discontinuity Design}",
journal = {The Quarterly Journal of Economics},
volume = {125},
number = {1},
pages = {215-261},
year = {2010}}

@article{hudsonDID,
  title={Interpreting Instrumented Difference-in-Differences},
  author={Hudson, Sally and Hull, Peter and Liebersohn, Jack},
  year={2017},
  journal = {Mimeo}}

@article{chaisemartinFDD,
    author = {{de Chaisemartin}, Clément and D'Haultfœuille, Xavier},
    title = "{Fuzzy Differences-in-Differences}",
    journal = {The Review of Economic Studies},
    volume = {85},
    number = {2},
    pages = {999-1028},
    year = {2017}}

@article{callaway2021,
title = {Difference-in-Differences with multiple time periods},
journal = {Journal of Econometrics},
volume = {225},
number = {2},
pages = {200-230},
year = {2021},
note = {Themed Issue: Treatment Effect 1},
author = {Brantly Callaway and Pedro H.C. Sant’Anna}}

@article{sun2021,
title = {Estimating dynamic treatment effects in event studies with heterogeneous treatment effects},
journal = {Journal of Econometrics},
volume = {225},
number = {2},
pages = {175-199},
year = {2021},
note = {Themed Issue: Treatment Effect 1},
author = {Liyang Sun and Sarah Abraham}}

@techreport{kolesar2013,
  author={Michal Koles{\'a}r},
  title={{Estimation in an Instrumental Variables Model With Treatment Effect Heterogeneity}},
  year=2013,
  institution={Princeton University. Economics Department.},
  type={Working Papers}}

@techreport{blandholetal2022,
 title = "When is TSLS Actually LATE?",
 author = "Blandhol, Christine and Bonney, John and Mogstad, Magne and Torgovitsky, Alexander",
 institution = "National Bureau of Economic Research",
 type = "Working Paper",
 series = "Working Paper Series",
 number = "29709",
 year = "2022"}

@article{sloczynski2022,
  title={When Should We (Not) Interpret Linear IV Estimands as LATE?},
  author={S{\l}oczy{\'n}ski, Tymon},
  journal={arXiv preprint arXiv:2011.06695v6},
  year={2022}
}

@article{dechaisemartin2020_did,
Author = {{de Chaisemartin}, Cl{\'e}ment and D'Haultfœuille, Xavier},
Title = {Two-Way Fixed Effects Estimators with Heterogeneous Treatment Effects},
Journal = {American Economic Review},
Volume = {110},
Number = {9},
Year = {2020},
Pages = {2964-96},
DOI = {10.1257/aer.20181169},
URL = {https://www.aeaweb.org/articles?id=10.1257/aer.20181169}}

@article{GOODMANBACON2021254,
title = {Difference-in-differences with variation in treatment timing},
journal = {Journal of Econometrics},
volume = {225},
number = {2},
pages = {254-277},
year = {2021},
note = {Themed Issue: Treatment Effect 1},
issn = {0304-4076},
doi = {https://doi.org/10.1016/j.jeconom.2021.03.014},
author = {Andrew Goodman-Bacon},
}

@article{ATHEY202262,
title = {Design-based analysis in Difference-In-Differences settings with staggered adoption},
journal = {Journal of Econometrics},
volume = {226},
number = {1},
pages = {62-79},
year = {2022},
note = {Annals Issue in Honor of Gary Chamberlain},
issn = {0304-4076},
doi = {https://doi.org/10.1016/j.jeconom.2020.10.012},
author = {Susan Athey and Guido W. Imbens}}

@article{borusyak2021revisiting,
  title={Revisiting Event Study Designs:
Robust and Efficient Estimation},
  author={Borusyak, Kirill and Jaravel, Xavier and Spiess, Jann},
  journal={arXiv preprint arXiv:2108.12419v3},
  year={2023}
}

@article{caetano_escanciano_2021, 
title={IDENTIFYING MULTIPLE MARGINAL EFFECTS WITH A SINGLE INSTRUMENT}, 
volume={37}, 
DOI={10.1017/S0266466620000213}, 
number={3}, 
journal={Econometric Theory}, publisher={Cambridge University Press}, 
author={Caetano, Carolina and Escanciano, Juan Carlos}, 
year={2021}, 
pages={464–494}}

@article{dhault_lowZ,
author = {D'Haultfœuille, Xavier and F{\'e}vrier, Philippe},
title = {Identification of Nonseparable Triangular Models With Discrete Instruments},
journal = {Econometrica},
volume = {83},
number = {3},
pages = {1199-1210},
year = {2015}
}

@article{torgovitsky2015,
author = {Torgovitsky, Alexander},
title = {Identification of Nonseparable Models Using Instruments With Small Support},
journal = {Econometrica},
volume = {83},
number = {3},
pages = {1185-1197},
year = {2015}
}

@article{masten_torgovsitsky,
    author = {Masten, Matthew A. and Torgovitsky, Alexander},
    title = "{Identification of Instrumental Variable Correlated Random Coefficients Models}",
    journal = {The Review of Economics and Statistics},
    volume = {98},
    number = {5},
    pages = {1001-1005},
    year = {2016}}

@article{hull2018isolateing,
  title={IsoLATEing: Identifying Counterfactual-Specific Treatment Effects with Cross-Stratum Comparisons},
  author={Hull, Peter},
  journal={Available at SSRN 2705108},
  year={2018}
}

@article{jobcorps,
Author = {Schochet, Peter Z. and Burghardt, John and McConnell, Sheena},
Title = {Does Job Corps Work? Impact Findings from the National Job Corps Study},
Journal = {American Economic Review},
Volume = {98},
Number = {5},
Year = {2008},
Pages = {1864-86},
DOI = {10.1257/aer.98.5.1864},
URL = {https://www.aeaweb.org/articles?id=10.1257/aer.98.5.1864}}

@article{hirshleifer2016,
author = {Hirshleifer, Sarojini and McKenzie, David and Almeida, Rita and Ridao-Cano, Cristobal},
title = {The Impact of Vocational Training for the Unemployed: Experimental Evidence from Turkey},
journal = {The Economic Journal},
volume = {126},
number = {597},
pages = {2115-2146},
year = {2016}}

@article{das2021,
title = {Training the disadvantaged youth and labor market outcomes: Evidence from Bangladesh},
journal = {Journal of Development Economics},
volume = {149},
pages = {102585},
year = {2021},
author = {Narayan Das}}

@article{alzua2016,
author = {Alzúa, María Laura and Cruces, Guillermo and Lopez, Carolina},
title = {Long-Run Effects Of Youth Training Programs: Experimental Evidence From Argentina},
journal = {Economic Inquiry},
volume = {54},
number = {4},
pages = {1839-1859},
year = {2016}}

@article{lundborg2017,
Author = {Lundborg, Petter and Plug, Erik and Rasmussen, Astrid Würtz},
Title = {Can Women Have Children and a Career? IV Evidence from IVF Treatments},
Journal = {American Economic Review},
Volume = {107},
Number = {6},
Year = {2017},
Pages = {1611-37}}

@techreport{angelov2012mothers,
  title={Mothers' income recovery after childbearing},
  author={Angelov, Nikolay and Karimi, Arizo},
  year={2012},
  institution={Working Paper}
}

@article{silles,
    author = {Silles, Mary A.},
    title = "{The impact of children on women’s labour supply and earnings in the UK: evidence using twin births}",
    journal = {Oxford Economic Papers},
    volume = {68},
    number = {1},
    pages = {197-216},
    year = {2015}}

@article{bronars_aertwins,
 author = {Stephen G. Bronars and Jeff Grogger},
 journal = {The American Economic Review},
 number = {5},
 pages = {1141--1156},
 publisher = {American Economic Association},
 title = {The Economic Consequences of Unwed Motherhood: Using Twin Births as a Natural Experiment},
 volume = {84},
 year = {1994}
}

@article{agi_fish,
    author = {Angrist, Joshua D. and Graddy, Kathryn and Imbens, Guido W.},
    title = "{The Interpretation of Instrumental Variables Estimators in Simultaneous Equations Models with an Application to the Demand for Fish}",
    journal = {The Review of Economic Studies},
    volume = {67},
    number = {3},
    pages = {499-527},
    year = {2000}
}

@article{picchetti2022marginal,
  title={Marginal Treatment Effects in Difference-in-Differences},
  author={Picchetti, Pedro and Pinto, Cristine},
  journal={Available at SSRN 4110160},
  year={2024}
}

@techreport{ischemia,
 title = "Instrumental Variables with Time-Varying Exposure: New Estimates of Revascularization Effects on Quality of Life",
 author = "Angrist, Joshua and Ferman, Bruno and Gao, Carol and Hull, Peter and Tecchio, Otavio L and Yeh, Robert W",
 institution = "National Bureau of Economic Research",
 type = "Working Paper",
 series = "Working Paper Series",
 number = "33296",
 year = "2024",
 doi = {10.3386/w33296},
 URL = "http://www.nber.org/papers/w33296"}

@misc{ferman2023dLATEs,
      title={Identifying Dynamic LATEs with a Static Instrument}, 
      author={Bruno Ferman and Otávio Tecchio},
      year={2023},
      eprint={2305.18114},
      archivePrefix={arXiv},
      primaryClass={econ.EM},
      url={https://arxiv.org/abs/2305.18114}, 
}

\clearpage
\appendix
\section{Proofs}
\label{appendixproofs}

\subsection{Proof of Proposition \ref{rffsestimands}}
\label{proofrffsestimands}

Fix $t\in\T\setminus\{1\}$. Under Assumption \ref{basicvalidity_T}, the only latent groups that do not have equal potential outcomes (in expectation) when assigned to different instrument values are the ones that would behave differently if assigned to $Z_i=1$ or $Z_i=0$. Thus, observations that are always-takers in all periods, observations that are never-takers in all periods up to $t$, and obsevations such that $(NT_{1:k-1}, AT_{k})$ for some $k\in\{2,...,t\}$ do not show up in our decomposition. The terms related to them cancel out.

Assumptions \ref{irreverse_T} and \ref{basicvalidity_T} imply that $C_1$, $(NT_{1:k-1}, C_{k})$ or $(NT_{1:k-1}, F_{k})$ with $k\in\{2,...,t\}$ are the only groups that can have different potential treatment status depending on $Z_i$ at $t$. Moreover, at each $k\in\{2,...,t\}$, $NT_{k-1}$ observations' behavior parallels the behavior of all observations in the first period, except that we allow for defiance. In particular, because of Assumption \ref{irreverse_T}, treatment access for $(NT_{1:k-1}, C_{k})$ and $(NT_{1:k-1}, F_{k})$ groups, with $k\in\{2,...,t\}$, has a dynamic that is analogous to the one for the $C_1$ group. Therefore, it suffices to consider the decomposition of $\E[Y_{i,t}|Z_i=1, C_1]-\E[Y_{i,t}|Z_i=0,C_1]$. Decomposition of the other terms follows from similar calculations, noting that defiers enter $RF_t$ with opposite signs. 

From Assumption \ref{irreverse_T}, $C_1$ observations with $Z_i=1$ are treated in all periods and so 
\begin{equation}
\label{decompositionC1Z1}
\E[Y_{i,t}|Z_i=1, C_1]=\E[Y_{i,t}(1,t-1)|C_1] 
\end{equation}
follows from Assumption \ref{basicvalidity_T}. To relate $\E[Y_{i,t}|Z_i=0,C_1]$ to potential outcomes, we need to consider all possible latent group histories $C_1$ observations can take up to $t$. Under Assumption \ref{irreverse_T}, these histories have the form $(C_{1:k-1}, AT_{k})$ with $k\in\{2,...,t\}$ or $C_{1:t}$. Working forwardly and applying Assumption \ref{basicvalidity_T}, we get:
\begin{equation}
\label{decompositionC1Z0}
    \begin{split}
    \E\left[Y_{i,t}\middle|Z_i=0, C_1\right]&
    =\Pb\left(AT_{2}\middle|C_1\right)
    \E\left[Y_{i,t}(1, t-2)\middle|C_1, AT_2\right]\\&
    +\Pb\left(C_{2}\middle|C_1\right)
    \E\left[Y_{i,t}\middle|Z_i=0, C_{1:2}\right]\\
    &=\Pb\left(AT_{2}\middle|C_1\right)
    \E\left[Y_{i,t}(1, t-2)\middle|C_1, AT_2\right]\\
    &\;
    \begin{split}
    +\;\Pb\left(C_{2}\middle|C_1\right)
    \Big\{&
    \Pb\left(AT_{3}\middle|C_{1:2}\right)
    \E\left[Y_{i,t}(1, t-3)\middle|C_{1:2}, AT_3\right]\\
    &
    +\Pb\left(C_{3}\middle|C_{1:2}\right)
    \Big[
    \Pb\left(AT_{4}\middle|C_{1:3}\right)
    \E\left[Y_{i,t}(1, t-4)\middle|C_{1:3}, AT_4\right]\\
    &\;\;\;\;\;
    \begin{split}
    +...\Pb\left(C_{t-1}\middle|C_{1:t-2}\right)
    \Big(
    &\Pb\left(AT_{t}\middle|C_{1:t-1}\right)
    \E\left[Y_{i,t}(1,0)\middle|C_{1:t-1}, AT_t\right]\\
    &+\Pb\left(C_{t}\middle|C_{1:t-1}\right)
    \E\left[Y_{i,t}(0)\middle|C_{1:t}\right]
    \Big)...\Big]\Big\}.
    \end{split}
    \end{split}
\end{split}
\end{equation}

Noting that $\E\left[Y_{i,t}(1,0)\middle|C_{1:t-1}, AT_t\right]=\E\left[Y_{i,t}(0)\middle|C_{1:t-1}, AT_t\right] + \Delta_t^0(C_{1:t-1}, AT_t)$, it follows from the Law of Iterated Expectations that the last term in parenthesis in the expression for $\E[Y_{i,t}|Z_i=0, C_1]$ equals $\Pb\left(AT_{t}\middle|C_{1:t-1}\right)
\Delta_t^0(C_{1:t-1}, AT_t)
+\E\left[Y_{i,t}(0)\middle|C_{1:t-1}\right]$.

Repeating this process backwards, we obtain:
\begin{equation*}
    \E\left[Y_{i,t}\middle|Z_i=0, C_1\right]=
    \E\left[Y_{i,t}(0)\middle|C_1\right]
    +\sum_{k=2}^{t}
    \left(
    \prod_{\ell=2}^{k-1}
    \Pb\left(C_{\ell}\middle|C_{1:\ell-1}\right)
    \right)
    \Pb\left(AT_{k}\middle|C_{1:k-1}\right)
    \Delta_t^{t-k}\left(C_{1:k-1}, AT_k\right),
\end{equation*}
under the convention that $\prod_{\ell=2}^{1}
...=1$. Lastly, write the product of probabilities as a joint probability to get:
\begin{equation*}
    \E\left[Y_{i,t}\middle|Z_i=0, C_1\right]
    =
    \E\left[Y_{i,t}(0)\middle| C_1\right]+
    \sum_{k=2}^{t}
    \Pb\left(
    C_{1:k-1}, AT_k\middle|C_1\right)
    \Delta_t^{t-k}\left(C_{1:k-1}, AT_k\right),
\end{equation*}
which implies:
\begin{equation*}
    \E\left[Y_{i,t}\middle|Z_i=1, C_1\right]-\E\left[Y_{i,t}\middle|Z_i=0, C_1\right]
    =
    \Delta_t^{t-1}(C_1)
    -
    \sum_{k=2}^{t}
    \Pb\left(
    C_{1:k-1}, AT_k\middle|C_1\right)
    \Delta_t^{t-k}\left(C_{1:k-1}, AT_k\right).
\end{equation*}

Computing the analogous decomposition for each of the other histories and accounting for the probability of each of them, we get:
\begin{equation}
\label{reducedform_extensive}
\begin{split}
    RF_t
    &=
    \Pb\left(C_1\right)
    \Delta_t^{t-1}(C_1)\\
    &-
    \sum_{k=2}^{t}
    \Pb\left(
    C_{1:k-1}, AT_k\right)
    \Delta_t^{t-k}\left(C_{1:k-1}, AT_k\right)
    \\
    &
    \begin{split}
    +
    \sum_{k=2}^{t}
    \Bigg[\Pb\left(NT_{1:k-1},C_k\right)
    &\Delta_t^{t-k}\left(NT_{1:k-1},C_k\right)\\
    &-\sum_{\ell=k+1}^t
    \Pb\left(NT_{1:k-1},C_{k:\ell-1}, AT_\ell\right)
    \Delta_t^{t-\ell}\left(NT_{1:k-1}, C_{k:\ell-1}, AT_\ell\right)
    \Bigg]
    \end{split}
    \\
    &
    \begin{split}
    -
    \sum_{k=2}^{t}
    \Bigg[\Pb\left(NT_{1:k-1},F_k\right)
    &\Delta_t^{t-k}\left(NT_{1:k-1},F_k\right)\\
    &-\sum_{\ell=k+1}^t
    \Pb\left(NT_{1:k-1},F_{k:\ell-1}, AT_\ell\right)
    \Delta_t^{t-\ell}\left(NT_{1:k-1}, F_{k:\ell-1}, AT_\ell\right)
    \Bigg],
    \end{split}
\end{split}
\end{equation}
under the convention that $\sum_{\ell=t+1}^t...=0$. Note that $\sum_{k=2}^t\sum_{\ell=k+1}^t...$ under the convention $\sum_{\ell=t+1}^t...=0$ can be written as $\sum_{\ell=2}^t\sum_{k=2}^{\ell-1}...$ under the convention $\sum_{k=2}^1...=0$. Thus, rearranging Equation \eqref{reducedform_extensive} and changing the index in the double sums (so that the outer summation  is indexed by $k$ and the inner one by $\ell$ with appropriate adjustment in the subscripts), we obtain:
\begin{equation}
\begin{split}
\label{reducedform_rear}
    RF_t
    &=
    \Pb\left(C_1\right)
    \Delta_t^{t-1}(C_1)
    -
    \sum_{k=2}^{t}
    \Pb\left(
    C_{1:k-1}, AT_k\right)
    \Delta_t^{t-k}\left(C_{1:k-1}, AT_k\right)
    \\
    &
    +
    \sum_{k=2}^{t}
    \Bigg[\Pb\left(NT_{1:k-1},C_k\right)
    \Delta_t^{t-k}\left(NT_{1:k-1},C_k\right)
    -\Pb\left(NT_{1:k-1},F_k\right)
    \Delta_t^{t-k}\left(NT_{1:k-1},F_k\right)\Bigg]
    \\
    &
    \begin{split}
    -\sum_{k=2}^t\sum_{\ell=2}^{k-1}\Bigg[
    \Pb\left(NT_{1:\ell-1},C_{\ell:k-1}, AT_k\right)
    &\Delta_t^{t-k}\left(NT_{1:\ell-1}, C_{\ell:k-1}, AT_k\right)\\
    -&\Pb\left(NT_{1:\ell-1},F_{\ell:k-1}, AT_k\right)
    \Delta_t^{t-k}\left(NT_{1:\ell-1}, F_{\ell:k-1}, AT_k\right)
    \Bigg].
    \end{split}
\end{split}
\end{equation}

The result as stated in Equation \eqref{lambdat} follows from noting that for each $k\in\{2,...,t\}$, any group $g$ for which the causal effect $\Delta^{t-k}_{t}(g)$ appears in Equation \eqref{reducedform_rear} multiplied by a negative (respectively, positive) probability is such that $g\in\G^-_k$ (respectively, $g\in\G^+_k$). 

For $FS_t$, we get from an analogous argument:
\begin{equation}
\begin{split}
    FS_t
    &=
    \Pb\left(C_1\right)
    -
    \sum_{k=2}^{t}
    \Pb\left(
    C_{1:k-1}, AT_k\right)
    +
    \sum_{k=2}^{t}
    \Bigg[\Pb\left(NT_{1:k-1},C_k\right)
    -\Pb\left(NT_{1:k-1},F_k\right)
    \Bigg]
    \\
    &
    -\sum_{k=2}^t\sum_{\ell=2}^{k-1}\Bigg[
    \Pb\left(NT_{1:\ell-1},C_{\ell:k-1}, AT_k\right)
    -\Pb\left(NT_{1:\ell-1},F_{\ell:k-1}, AT_k\right)
    \Bigg],
\end{split}
\end{equation}
under the convention that $\sum_{\ell=2}^1...=0$. Again, the result as stated in Equation \eqref{gammat} follows from the definition of the sets $\G^-_k$'s and $\G^+_k$'s.

\subsection{Proof of Corollary \ref{IVestimand}}
\label{proofcorollaryIVestimand}
That $RF_t/FS_t$ is a linear combination of the causal effects in $RF_t$ is straightforward. That the weights in the IV estimand sum to one follows from noting that the sum of the probabilities in $RF_t$ equals $FS_t$. For any given $k\in\{2,...,t\}$, we have that 
\begin{equation*}
    FS_{k}-FS_{k-1}=
    -\sum_{g\in\G^-_k}\Pb(g)
    +\sum_{g\in\G^+_k}\Pb(g)<0
\end{equation*}
only when there exists $g\in\G^-_k$ such that $\Pb(g)>0$, which implies that there is at least one causal effect that enters $RF_t$ multiplied by a negative probability, which in turn implies a negative weight in the IV estimand at $t$. Lastly, $FS_t>0$ is a necessary and sufficient condition for the negatively weighted causal effects in $RF_t$ and $RF_t/FS_t$ to be the same because the sign of the weights in the IV estimand equals the sign of the weights in $RF_t$ times the sign of $FS_t$. 

\subsection{Proof of Proposition \ref{identification}}
\label{proofidentification}

For any $t\in\T\setminus\{1\}$,
\begin{equation*}
    \rho_t=FS_{t-1}-FS_t=\sum_{g\in\G^-_t}\Pb(g)
    -\sum_{g\in\G^+_t}\Pb(g).
\end{equation*} 

Under Assumption \ref{Assumption_id_T}, for any given $t\in\T\setminus\{1\}$, $RF_t$ becomes
\begin{equation*}
\begin{split}
\label{RFforpartialid}
RF_t
    &=
    \Pb\left(C_1\right)
    \Delta^{t-1}(C_1)
    -
    \left[\sum_{k=2}^{t}
    \left(\sum_{g\in\G^-_k}
    \Pb\left(g\right)
    -\sum_{g\in\G^+_k}
    \Pb\left(g\right)
    \right)
    \Delta^{t-k}\left(C_1\right)\right]\\
    &=
    \Pb\left(C_1\right)
    \Delta^{t-1}(C_1)
    -
    \sum_{k=2}^{t}
    \rho_k
    \Delta^{t-k}\left(C_1\right),
\end{split}
\end{equation*}
which implies the linear system $\mathbf{RF}=\mathbf{P}\mathbf{\Delta}$ if we recall that $FS_1=\Pb(C_1)$ and that $RF_1=\Pb(C_1)\Delta^{0}(C_1)$ under Assumption \ref{Assumption_id_T}. The desired result follows from $\mathbf{P}$ being invertible under Assumption \ref{basicvalidity_T}.

\subsection{Proof of Proposition \ref{tebounds_T}}
\label{partialIDproof}

For the bounds that are valid only assuming \ref{irreverse_T} and \ref{basicvalidity_T}, we prove the more general version (as stated in Appendix \ref{partialid_generalboundste}). Fix $t\in\T\setminus\{1\}$. Rearranging the reduced form (Equation \eqref{lambdat}):
\begin{equation}
\label{RFrearbounds}
\begin{split}
    \Pb\left(C_1\right)
    \Delta_t^{t-1}(C_1)
    &=
    RF_t
    +
    \sum_{k=2}^{t}\sum_{g\in\G^-_k}
    \Pb\left(g\right)
    \Delta_t^{t-k}\left(g\right)
    -
    \sum_{k=2}^{t}\sum_{g\in\G^+_k}
    \Pb\left(g\right)
    \Delta_t^{t-k}\left(g\right)\\
    &\ge
    RF_t
    +
    \underline{\Delta}_t
    \sum_{k=2}^{t}\sum_{g\in\G^-_k}
    \Pb\left(g\right)
    -
    \overline{\Delta}_t
    \sum_{k=2}^{t}\sum_{g\in\G^+_k}
    \Pb\left(g\right)
\end{split}
\end{equation}

Notice that $\sum_{k=2}^{t}\sum_{g\in\G^-_k} \Pb\left(g\right)\le\Pb\left(D_{i,t}>D_{i,1}\middle|Z_i=0\right)$ because there are latent groups that switch into treatment after the first period when $Z_i=0$ that are not included in the sets $\G^-_k$ for any $k\in\{2,...,t\}$ (namely, the $(NT_{1:k-1}, AT_k)$ with $k\in\{2,...,t\}$). Moreover, $\sum_{k=2}^{t}\sum_{g\in\G^-_k} \Pb\left(g\right)\ge \max\left\{FS_1-FS_t,0\right\}$. Also, $\max\left\{FS_t-FS_1,0\right\}\le\sum_{k=2}^{t}\sum_{g\in\G^+_k} \Pb\left(g\right) \le\Pb\left(D_{i,t}>D_{i,1}\middle|Z_i=1\right)$. Thus, we can get a lower bound for the expression in the second row of Equation \eqref{RFrearbounds} by bounding the sum of probabilities, which implies the the following lower bound for $\Pb\left(C_1\right) \Delta_t^{t-1}(C_1)$:
\begin{equation*}
\begin{split}
    RF_t
    &+\mathbf{1}\left(\underline{\Delta}_t<0\right)
    \Pb\left(D_{i,t}>D_{i,1}\middle|Z_i=0\right)
    \underline{\Delta}_t
    +\mathbf{1}\left(\underline{\Delta}_t\ge0\right)
    \max\left\{FS_1-FS_t,0\right\}
    \underline{\Delta}_t
    \\
    &-\mathbf{1}\left(\overline{\Delta}_t\ge0\right)
    \Pb\left(D_{i,t}>D_{i,1}\middle|Z_i=1\right)
    \overline{\Delta}_t
    -\mathbf{1}\left(\overline{\Delta}_t<0\right)
    \max\left\{FS_t-FS_1,0\right\}
    \overline{\Delta}_t
    ,
\end{split}
\end{equation*}
from which the lower bound in Equation \eqref{generallowerbound_T} follows directly since $\Pb(C_1)=FS_1>0$ under Assumption \ref{basicvalidity_T}. The argument for the upper bound is analogous. 

To obtain the bounds under the condition that for each $\tau\in\{0,...,t-2\}$, for all $g,g'\in\G_{t-\tau}$ with $\Pb(g)>0$ and $\Pb(g')>0$, $\Delta^\tau_t(g)=\Delta_t^\tau(g')$, note that under such condition $RF_t$ (Equation \eqref{lambdat}) becomes
\begin{equation*}
    \Pb\left(C_1\right)
    \Delta_t^{t-1}(C_1)
    =RF_t
    +
    \sum_{k=2}^{t}
    \rho_k
    \Delta^{t-k}_t(*),
\end{equation*}
where, for a given $k\in\{2,...,t\}$, $\Delta^{t-k}_t(*)\in\left[\underline{\Delta}_t, \overline{\Delta}_t\right]$ equals $\Delta^{t-k}_t(g)$ for all $g\in\G_k$. Then, because for any $k\in\{2,...,t\}$, $\rho_k=FS_{k-1}-FS_k$,
\begin{equation}
\label{RFsearbounds}
\begin{split}
    \Pb\left(C_1\right)
    \Delta_t^{t-1}(C_1)
    &\ge 
    RF_t
    +
    \sum_{k=2}^t
    \mathbf{1}\left(FS_{k-1}\ge FS_k\right)
    \rho_k
    \underline{\Delta}_t
    +
    \sum_{k=2}^t
    \mathbf{1}\left(FS_{k-1}< FS_k\right)
    \rho_k
    \overline{\Delta}_t\\
    &=
    RF_t
    +
    \underline{\Delta}_t
    \sum_{k=2}^t
    \rho_k
    +
    \left(\overline{\Delta}_t-\underline{\Delta}_t\right)
    \sum_{k=2}^t
    \mathbf{1}\left(FS_{k-1}< FS_k\right)
    \rho_k
\end{split}
\end{equation}
and the upper bound follows from an analogous argument. The bounds as stated in the proposition follow from $\sum_{k=2}^t\rho_k=FS_1-FS_t$. To prove that these later bounds are tighter, from comparing Equations \eqref{RFrearbounds} and \eqref{RFsearbounds}, we note that a sufficient condition for the lower bound to be tighter is 
\begin{equation*}
\begin{split}
    &\left[\sum_{k=2}^t
    \mathbf{1}\left(FS_{k-1}\ge FS_k\right)
    \rho_k-\sum_{k=2}^{t}\sum_{g\in\G^-_k}
    \Pb\left(g\right)
    \right]
    \underline{\Delta}_t
    \\+&
    \left[\sum_{k=2}^t
    \mathbf{1}\left(FS_{k-1}< FS_k\right)
    \rho_k+\sum_{k=2}^{t}\sum_{g\in\G^+_k}
    \Pb\left(g\right)
    \right]
    \overline{\Delta}_t
    \ge0,
\end{split}
\end{equation*}
which is equivalent to
\begin{equation*}
\begin{split}
    &\left[\sum_{k=2}^t\rho_k
    -\sum_{k=2}^{t}\sum_{g\in\G^-_k}
    \Pb\left(g\right)
    +\sum_{k=2}^{t}\sum_{g\in\G^+_k}
    \Pb\left(g\right)
    \right]
    \underline{\Delta}_t
    \\+&
    \left[\sum_{k=2}^t
    \mathbf{1}\left(FS_{k-1}< FS_k\right)
    \rho_k+\sum_{k=2}^{t}\sum_{g\in\G^+_k}
    \Pb\left(g\right)
    \right]
    \left(\overline{\Delta}_t
    -\underline{\Delta}_t\right)
    \ge0\\
    \iff&
    \left[\sum_{k=2}^t
    \mathbf{1}\left(FS_{k-1}< FS_k\right)
    \rho_k+\sum_{k=2}^{t}\sum_{g\in\G^+_k}
    \Pb\left(g\right)
    \right]
    \left(\overline{\Delta}_t
    -\underline{\Delta}_t\right)
    \ge0
\end{split}
\end{equation*}
since $\sum_{k=2}^t\rho_k-\sum_{k=2}^{t}\sum_{g\in\G^-_k}\Pb\left(g\right)+\sum_{k=2}^{t}\sum_{g\in\G^+_k}\Pb\left(g\right)=0$. Because $\overline{\Delta}_t-\underline{\Delta}_t\ge0$ and 
\begin{equation}
\label{strictconditionlower}
\begin{split}
    -\sum_{k=2}^t\mathbf{1}\left(FS_{k-1}< FS_k\right)\rho_k
    &=
    \sum_{k=2}^t\mathbf{1}\left(FS_{k-1}< FS_k\right)\left[
    \sum_{g\in\G^+_k}\Pb(g)
    -\sum_{g\in\G^-_k}\Pb(g)\right]\\
    &\le 
    \sum_{k=2}^t\mathbf{1}\left(FS_{k-1}< FS_k\right)
    \sum_{g\in\G^+_k}\Pb(g)\\
    &\le 
    \sum_{k=2}^t\sum_{g\in\G^+_k}\Pb(g),
\end{split}
\end{equation}
the condition is verified. Once more, the argument for the upper bound is analogous.

\newpage
\section{Partial identification with general bounds on treatment effects}
\label{partialid_generalboundste}

Proposition \ref{tebounds_T_general} states a version of Proposition \ref{tebounds_T} in which lower bounds for the treatment effects can be positive and upper bounds can be negative. It only extends the more general bounds presented in Proposition \ref{tebounds_T} (Equations \eqref{generallowerbound_T} and \eqref{generalupperbound_T}) because the bounds in Equations \eqref{specificlowerbound_T} and \eqref{specificupperbound_T} are generally valid (and continue to be weakly tighter). Appendix \ref{partialIDproof} gives a proof of this proposition.

\begin{propositionp}{\ref*{tebounds_T}$'$}
\label{tebounds_T_general}
Suppose Assumptions \ref{irreverse_T} and \ref{basicvalidity_T} hold. If, for $t\in\T\setminus\{1\}$, there exist $\underline{\Delta}_t, \overline{\Delta}_t\in\mathbb{R}$ such that, for all $\tau\in\{0,...,t-2\}$, if $g\in\G_{t-\tau}$ and $\Pb(g)>0$, $\underline{\Delta}_t\le\Delta^\tau_t(g)\le\overline{\Delta}_t$, then a lower bound for $\Delta^{t-1}_t(C_1)$ is given by
\begin{equation*}
\begin{split}
    \frac{RF_t}{FS_1}
    &+\mathbf{1}\left(\underline{\Delta}_t<0\right)
    \Pb\left(D_{i,t}>D_{i,1}\middle|Z_i=0\right)
    \frac{\underline{\Delta}_t}{FS_1}
    +\mathbf{1}\left(\underline{\Delta}_t\ge0\right)
    \max\left\{FS_1-FS_t,0\right\}
    \frac{\underline{\Delta}_t}{FS_1}
    \\
    &-\mathbf{1}\left(\overline{\Delta}_t\ge0\right)
    \Pb\left(D_{i,t}>D_{i,1}\middle|Z_i=1\right)
    \frac{\overline{\Delta}_t}{FS_1}
    -\mathbf{1}\left(\overline{\Delta}_t<0\right)
    \max\left\{FS_t-FS_1,0\right\}
    \frac{\overline{\Delta}_t}{FS_1}
\end{split}
\end{equation*}
and an upper bound is given by
\begin{equation*}
\begin{split}
    \frac{RF_t}{FS_1}
    &+\mathbf{1}\left(\overline{\Delta}_t\ge0\right)
    \Pb\left(D_{i,t}>D_{i,1}\middle|Z_i=0\right)
    \frac{\overline{\Delta}_t}{FS_1}
    +\mathbf{1}\left(\overline{\Delta}_t<0\right)
    \max\left\{FS_1-FS_t,0\right\}
    \frac{\overline{\Delta}_t}{FS_1}\\
    &-\mathbf{1}\left(\underline{\Delta}_t<0\right)
    \Pb\left(D_{i,t}>D_{i,1}\middle|Z_i=1\right)
    \frac{\underline{\Delta}_t}{FS_1}
    -\mathbf{1}\left(\underline{\Delta}_t\ge0\right)
    \max\left\{FS_t-FS_1,0\right\}
    \frac{\underline{\Delta}_t}{FS_1}.
\end{split}
\end{equation*}
\end{propositionp}

\end{document}